\definecolor{Orange}{rgb}{1,0.5,0}
\newtheorem{definition}{Definition} [section]
\newtheorem{theorem}{Theorem}  [section]
\newcommand{\usman}[1]{\footnote{{\bf Usman: #1}}}
\newcommand{\aftab}[1]{\footnote{{\bf Aftab: #1}}}
\newcommand{\tony}[1]{\footnote{{\bf Tony: #1}}}
\newcommand{\usman}[1]{}
\newcommand{\aftab}[1]{}
\newcommand{\tony}[1]{}
\begin{document}

\title{PoFEL: Energy-efficient Consensus for Blockchain-based Hierarchical Federated Learning 

\author{Shengyang Li, Qin~Hu, Zhilin Wang}% <-this % stops a space
\thanks{This work was supported in part by the US NSF under Grant
CNS-2105004. (Corresponding Author: Qin Hu.)}
\IEEEcompsocitemizethanks{
\IEEEcompsocthanksitem Shengyang Li is with the Department of Electrical and Computer Engineering, Indiana University-Purdue University Indianapolis (IUPUI), IN, 46202, USA. This work was partially done when he was with the Department of Computer and Information Science at IUPUI.
E-mail: sl137@iu.edu

\IEEEcompsocthanksitem Qin Hu and Zhilin Wang are with the Department of Computer and Information Science, Indiana University-Purdue University Indianapolis (IUPUI), IN, 46202, USA. E-mail: \{qinhu, wangzhil\}@iu.edu
}
}
\IEEEpubidadjcol
\IEEEtitleabstractindextext{%
\justify
\begin{abstract}
Facilitated by mobile edge computing, client-edge-cloud hierarchical federated learning (HFL) enables communication-efficient model training in a widespread area but also incurs additional security and privacy challenges from intermediate model aggregations and remains the single point of failure issue. To tackle these challenges, we propose a blockchain-based HFL (BHFL) system that operates a permissioned blockchain among edge servers for model aggregation without the need for a centralized cloud server. The employment of blockchain, however, introduces additional overhead. To enable a compact and efficient workflow, we design a novel lightweight consensus algorithm, named Proof of Federated Edge Learning (PoFEL), to recycle the energy consumed for local model training. Specifically, the leader node is selected by evaluating the intermediate FEL models from all edge servers instead of other energy-wasting but meaningless calculations. This design thus improves the system efficiency compared with traditional BHFL frameworks.
To prevent model plagiarism and bribery voting during the consensus process, we propose Hash-based Commitment and Digital Signature (HCDS) and Bayesian Truth Serum-based Voting (BTSV) schemes. Finally, we devise an incentive mechanism to motivate continuous contributions from clients to the learning task. Experimental results demonstrate that our proposed BHFL system with the corresponding consensus protocol and incentive mechanism achieves effectiveness, low computational cost, and fairness.
\end{abstract}
\begin{IEEEkeywords}
Hierarchical federated learning, blockchain, consensus, incentive mechanism
\end{IEEEkeywords}
}

\maketitle
\section{Introduction}

%Mobile Edge Computing (MEC) has attracted significant academic and industry interests in recent years with the development of hardware technologies and increasing needs for computing resources. The core idea of MEC is that it achieves 
\IEEEPARstart{A}S the increasing amount of data generated at end devices causes massive communication and latency overhead for traditional cloud computing, mobile edge computing (MEC) \cite{ren2019survey} has been implemented to serve end users promptly at proximity to achieve cloud computing capabilities at the edge of the network. Machine learning (ML) tasks at end devices are offloaded by sending data to edge servers for analysis and calculation. %However, the data collected by the end devices are sensitive, and uploading all of the data to the edge servers might lead to severe privacy concerns.
Due to data privacy and communication cost concerns, 
%Among various machine learning frameworks, 
federated learning (FL) \cite{mcmahan2017communication,banabilah2022federated} is introduced in MEC to form a new distributed ML paradigm, namely federated edge learning (FEL), % to tackle the arising privacy concerns 
where the edge server acts as the parameter server and proximate devices within the server's cover range work as FL clients to train ML model collaboratively. %For MEC, this removes the need to collect private data from clients (end devices) to perform learning. Instead, clients perform local training and upload the model updates to the edge server without sharing their data. 
However, since end devices are widely spread across a vast area, one single server may not be capable of implementing FEL among all of them, stimulating the idea of hierarchical federated edge learning (HFL) \cite{liu2020client}. According to the HFL framework illustrated in Fig. \ref{fig_HFEL}, the cloud server at the top layer coordinates the training of the global model, multiple edge servers at the middle layer aggregate local models from end devices to generate intermediate models, and end devices at the bottom layer train the local models based on their local data.
\begin{figure}[h ]
\centering
\subfigure[Traditional HFL.]{
\label{fig_HFEL}
\includegraphics[width=0.23\textwidth]{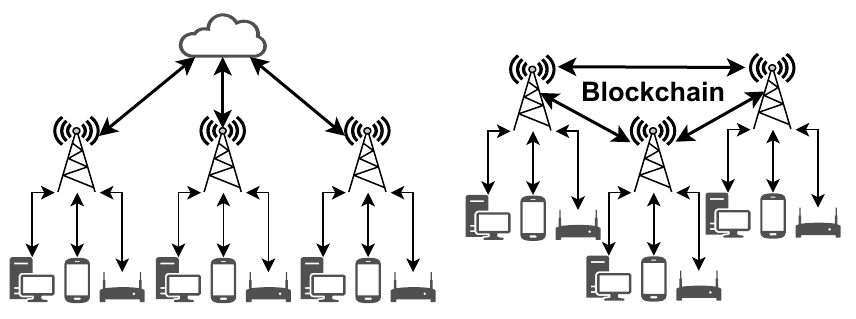}}
\subfigure[Blockchain-based HFL.]{
\label{fig_BCHEFL}
\includegraphics[width=0.23\textwidth]{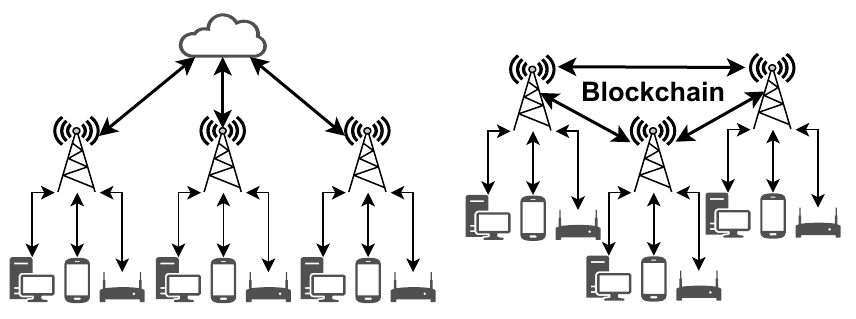}}
\caption{HFL frameworks.}
\label{diff_HFEL}
\end{figure}
% \begin{figure}[b]
% 	  %\vspace{-0.4cm}
%       \centering
%       \includegraphics[width=0.95\columnwidth]{HFEL.pdf}
%       %\vspace{0cm}
%       \caption{HFEL framework.}
%       \label{fig:HFEL}
% \end{figure}

However, the implementation of HFL still faces two major challenges: i) the single cloud server risks the whole system with the single-point failure; and ii) the layered topology introduces additional model security and privacy issues, such as model poisoning attacks by altering the intermediate models \cite{liu2021privacy} and leaking the local models of end devices at the edge servers. %  Model poisoning attacks refer to the attacks that leads to inaccurate or biased global model and local model update leakages could be edge servers leaking local models of end devices to others that violate privacy.
% \textbf{should the trust/accountability issue target the security and privacy issues? such as the faulty local models from end devices, faulty aggregation of local models at edge servers, and edge servers leaking the local models of devices to others.}
% multiple edge servers that work for their interests as individual entities but would like to collaborate to accomplish a learning task require a trustworthy and reliable platform. 
Such inherent issues become an obstacle to deploying HFL in practice. 
%As a result, edge servers require a trustworthy and tamper-proof FL process in which each edge server will participate instead of letting a central server manage FL. 

One promising solution is introducing the blockchain \cite{nakamoto2008bitcoin,huo2022comprehensive} into HFL to omit the need for a central server, as well as provide transparent and reliable model exchanges and global model aggregations, so as to realize better security performance. As shown in Fig. \ref{fig_BCHEFL}, we propose a blockchain-based HFL (BHFL),  where edge servers of HFL form a consortium blockchain network and exchange intermediate models on the blockchain. The global model aggregation is then performed in a distributed and traceable manner instead of the centralized way in the traditional HFL. The consortium blockchain provides a transparent and secure platform for FL, and all edge servers can track and audit the whole FL process independently. The lower layer of our proposed framework is the same as that of the traditional HFL, where each edge server and the connected clients 
%an FEL process between the edge server and connected clients. Edge server and its clients 
exchange local model updates and aggregated intermediate models several times in multiple FEL iterations. 
Limited related work about implementing blockchain in HFL can be found in \cite{sarhan2022hbfl,zhang2021bc,huang2023distance} while others involving the keywords of blockchain, hierarchical, and FL are applying hierarchical blockchain to FEL \cite{fu2023incentive,chai2020hierarchical, %} and hierarchical FL \cite{
xu2022mudfl}. 
%\textbf{\textit{add those seemingly related work and summarize their feature to indicate their irrelevance to our work.}} 
In \cite{sarhan2022hbfl}, the authors replace the cloud server with a dedicated blockchain as the top layer, and the authors in \cite{zhang2021bc,huang2023distance} append a blockchain between edge servers and the cloud server to work as a storage and record platform. Different from these existing studies deploying the blockchain as an additional part of HFL, our proposed system fully integrates blockchain with HFL through deploying blockchain on edge servers, thus being more compact and efficient compared with these existing BHFL frameworks.

% why do we need a new consensus algorithm? why the existing ones not working in our framework?
Nevertheless, it is nontrivial to deploy the proposed BHFL framework. First, the existing consensus protocols cannot be arbitrarily implemented in this BHFL system since they are usually working as an additional process to determine the generator of a new block besides finishing the FL computation task, bringing extra computational overhead to edge servers and slowing down the training speed of the global model. %Second, when the leader node election which determines the edge server (node) who has the rights to publish and broadcast the new block, considers the votes coming from all the participating edge servers, malicious edge servers could bribe other edge servers to cast votes that affects the votes tallying process to fail or manipulate the leader node election process. 
Second, the intermediate models of edge servers need to be uploaded to the blockchain during the consensus process for verification and block generation, which makes it easy for some malicious edge servers to plagiarize models shared by other edge servers without performing the actual FEL process. 
Moreover, given that numerous end devices conduct FL tasks via the coordination of distributed edge servers, consuming extensive computing and communication resources, it is imperative to provide sufficient incentives to them to compensate for their cost and motivate their long-term participation.

%the lack of an energy-efficient consensus algorithm at the consortium blockchain can make the model aggregation 
%aggregating intermediate models at edge servers to the global model in a distributed manner. Since FL process requires a significant amount of computing resources, the process done in the blockchain should not introduce further significant computational overhead. Moreover, there is a plagiarism concern during intermediate model exchanges among edge servers for the consensus process, where one edge server plagiarizes models shared from other edge servers without performing FEL process.
%Rewards distributions among edge servers and model owner (task publisher) to provide constant motivation for FEL processes. 

In this paper, we design an energy-efficient consensus algorithm, named PoFEL, to address the first challenge, where the intermediate models of FEL at edge servers are utilized to determine the leader in the consortium blockchain. To solve the plagiarism issue, we design a cryptographic scheme named HCDS, combining Hash, Commitment Scheme, and Digital Signature, to validate the process of model exchanges among edge servers without worrying about their dishonesty. We employ cosine similarity to measure the difference between the aggregated global model and each intermediate model from the edge server, %. The scheme is applied individually at each edge server, and 
based on which each edge server votes the edge server with the lowest difference to be the leader node. All votes will be sent to a smart contract for vote tallying, 
with a weighted voting system that adopted Bayesian Truth Serum (BTS) scoring method to tackle the potential issue of bribery voting. %This process only involves simple model aggregation and cosine similarity computations, which is energy-efficient. The leader selected from edge servers is responsible for global model aggregation in each iteration. Then the global model is stored in the block and broadcast to other edge servers. To support multiple learning tasks working concurrently in the system, we also design a simple first-in-first-choose queue that stores time information of all learning tasks. The consensus algorithm will iterate the queue to select the learning task for cosine similarity calculation in each iteration. 
 %\textbf{\textit{In addition, we design a simple first-in-first-choose queue for the consensus algorithm as the criteria of leader node election to achieve consensus among multiple FL tasks concurrently.}} 
Finally, we propose a Stackelberg game-based incentive mechanism to address the motivation challenge.

Overall, the main contributions of this work can be summarized as follow:
\begin{itemize}
    \item We propose a BHFL framework with edge servers running the consortium blockchain for model aggregation and distributed end devices capable of collecting diverse data samples to benefit model training. %, where multiple FL tasks can be performed simultaneously at this platform to provide more efficient and powerful FL services with various training data.
    \item To optimize the energy efficiency of the proposed BHFL system, we design a novel consensus algorithm, namely Proof of Federated Edge Learning (PoFEL), to reduce the extra computational cost for reaching blockchain consensus using the edge servers' intermediate models as an evaluation metric for determining the leader node. We design a Hash-based Commitment and Digital Signature (HCDS) scheme to prevent model plagiarism in the consensus process involving model exchanges. Bayesian Truth Serum-based Voting (BTSV) is also developed to minimize the chance of bribery voting.  %A simple iterative First-in-First-choose mechanism in the consensus algorithm is introduced to support concurrent multiple learning tasks
    \item To ensure the motivation of participants involved in finishing learning tasks, we proposed a two-stage Stackelberg game-based incentive mechanism to help model owners decide the rewards being distributed to each FEL cluster and facilitate the decision of edge server as the FEL coordinator regarding the optimal amount of computational resources to be spent.% in each FL task. The rewards allocated to each FEL cluster will be further distributed to end devices by their associated edge server.
    \item We evaluate our proposed mechanisms through extensive experiments. The experiment results prove that our mechanisms achieve low computation cost, fairness, and effectiveness.

\end{itemize}

The remainder of this paper is organized as follows. Related work is discussed in Section \ref{sec:related}. The system workflows and adversary models are illustrated in Section \ref{sec:sysmod}. The consensus algorithm PoFEL is presented with details in Section \ref{sec:consensus}, and an incentive mechanism based on a two-stage Stackelberg game is introduced in Section \ref{sec:incent}. Security analysis of the proposed system is given in Section \ref{sec:security} and experiment evaluations are presented in Section \ref{sec:experiment}. Finally,  we conclude the paper in Section \ref{sec:discussion}.
  
\section{Related Work}
\label{sec:related}
\subsection{Hierarchical Federated Edge Learning}
Traditional FL studies rely on a central server (e.g., cloud) to collect local models from clients. A lot of recent research has been proposed to introduce edge computing \cite{peng2018survey} into traditional FL and formulate various FEL systems \cite{abdellatif2022communication,liu2020client,lim2021decentralized,chen2023enhanced}. In \cite{liu2020client}, the authors propose a client-edge-cloud HFL system where edge servers are deployed as intermediate model aggregators to reduce the overall communication cost and the processing burden of the cloud, along with the design of
%. The factor inspiring this paper is that the communication latency and cost from the client to the edge server are generally smaller than that in the case of client-cloud communication. This also relieves the heavy burden of the cloud server directly processing received model updates since the edge server performs intermediate aggregations before the global aggregations at the cloud. The authors then invent 
a hierarchical FL aggregation algorithm named HierFAVG and its convergence analysis. % is designed to define the aggregation process of the proposed system framework. They further provide the convergence analysis to validate the effectiveness of HierFAVG given non-IID user data. 
This paper serves as a very promising and classical FEL architecture. Various following papers investigated the remaining issues of FEL systems. The authors in \cite{abdellatif2022communication} devise a novel client-edge assignment scheme based on client location, data distribution, and communication constraints under the same FEL system architecture, where the optimization goal is to minimize FL convergence time, as well as communication and computation cost of the whole system. Similarly, Lim et al. \cite{lim2021decentralized} extend the FEL system proposed in \cite{liu2020client} to support multiple model owners, % with the problem of joint resource allocation being resolved, 
where each edge server serves as a cluster head that connects with a model owner. %The authors propose a joint resource allocation for their system framework to minimize the cost.
The authors in \cite{chen2023enhanced} investigate the model training challenges in a Hybrid Hierarchical Federated Edge Learning (HHFEL) system with device stragglers. They design an online semi-asynchronous FL approach to improve training efficiency by optimizing resource allocation and device selection. %This method reduces device stragglers and encourages efficient local model training.
%\qh{The authors in \cite{chen2023enhanced} investigate the model training challenges in a Hybrid Hierarchical Federated Edge Learning (HHFEL) system, focusing on the training efficiency degradation introduced by device stragglers in heterogeneous networks. The problem is modeled as an online Markov Decision Process and an online semi-asynchronous FL process is designed between the edge layer and device layer to schedule the resource allocation between edge and device and device selection. (too long to understand, too many repeated words.)}

\subsection{Blockchain-based Federated Learning}
Some studies about blockchain-based federated learning (BCFL) systems directly replace the central server with a blockchain network involving a specified set of miners \cite{kim2019blockchained, Ma2020WhenFL, wang2023incentive}. 
%In \cite{kim2019blockchained}, the authors propose a BCFL system to solve the single point failure introduced by the single central server and the motivation of clients constantly providing model updates to the model owner. The blockchain's consensus algorithm and incentive mechanism inherently address these two issues. 
Other papers \cite{hua2020blockchain,weng2019deepchain} with alternative approaches employ FL clients as blockchain nodes without introducing additional miners. Since mining is usually computationally intensive for an FL client in addition to local model training, %computation power becomes a significant factor to clients are the best candidates for being miners in the blockchain.
%Therefore in this line of research,  
partial or all clients with enough computation power may work as blockchain miners for validating model updates and new block generation. 
%\textbf{\textit{However, mining is usually computationally heavy for a client in addition to local model training. Therefore, powerful equipment is the best candidate for being a miner in the blockchain. In \cite{nguyen2021federated}, the authors propose a framework called FLchain, in which the blockchain consists of edge servers and a portion of local devices. }}%All edge servers serve as miners, and some powerful local devices which are already training nodes also participate in blockchain mining as blockchain nodes. }}

%\textbf{\textit{PLEASE rewrite the whole paragraph according to our discussion on Wechat! }}
The idea of blockchain could also be applied in FEL to overcome single-point failure and other security challenges including using blockchain \cite{jin2023lightweight,liu2023blockchain,nguyen2021federated} or hierarchical blockchain \cite{fu2023incentive,chai2020hierarchical,xu2022mudfl}. Both works in \cite{jin2023lightweight,liu2023blockchain} propose similar BCFL systems for the scenario of FEL where the edge servers are responsible for aggregating local models from the local devices. The edge servers also formulate a blockchain that is responsible for consensus and auditing the local models. 
In \cite{nguyen2021federated}, the authors propose a framework called FLchain, in which the blockchain is maintained by edge servers and a portion of powerful local devices. 
Chai et al. \cite{chai2020hierarchical} introduce a hierarchical blockchain structure into FEL to achieve knowledge sharing among the internet of vehicles. Roadside units (RSUs) are grouped according to their locations and communication ranges, and then a ground blockchain is formulated by a group of RSUs and used for recording model updates submitted from low-level FL clients, i.e., vehicles. %\textbf{\textit{Multiple ground blockchains are formulated among roadside units (RSUs).}} 
%FL clients upload model updates to nearby RSUs so that 
RSUs would integrate these model updates with their own trained model updates and then upload integrated results to the top blockchain composed of base stations. 
%and the middle layers of edge servers form a blockchain and propose a lightweight consensus algorithm called Proof-of-Knowledge that selects the node with the most accurate learning result as the leader. % to support the proposed framework. 
Xu et al. \cite{xu2022mudfl} propose a similar hierarchical blockchain-enabled FL framework in which an inter-microchain network interconnects multiple independent microchain consensus networks composed of clients and intermediate aggregators. The FL server and aggregators form the inter-microchain and a subset of them are selected as a validator committee to execute Byzantine Fault Tolerance (BFT) consensus to record microchains' checkpoints. Both studies apply the hierarchical blockchain to FL but the extra system cost in terms of latency and energy is not discussed.

\subsection{Blockchain-based Hierarchical Federated Learning}
%\textbf{\textit{PLEASE rewrite the whole paragraph according to our discussion on Wechat! }}
HFL proposed in \cite{liu2020client} inspires several studies that employ blockchain in HFL. In \cite{sarhan2022hbfl}, the authors propose a collaborative IoT intrusion detection system among organizations using a blockchain-based hierarchical FL framework. % where the design of FL system follows HFL structure. 
Given each organization possesses multiple IoT endpoints, organizations work as middle-layer servers and aggregate their collected ML-based intrusion detection models from IoT endpoints. Resulting models are further uploaded to a permissioned blockchain that works as the cloud server for global model aggregation.  Zhang et al. \cite{zhang2021bc} implement a data transmission model using the idea of edge computing, FL, and blockchain. % where blockchain is deployed between edge servers and cloud server as verifier and recorder. %, where additional edge servers are introduced between the cloud server and clients. 
Edge servers collect local model updates from low-level end devices and transmit them to the blockchain for verification and aggregation. Then the cloud server retrieves aggregated models from the blockchain and further aggregates these intermediate models into the global model. Huang et al. \cite{huang2023distance} propose a multi-layer blockchain-based HFL network consisting of an IoT devices layer, edge layer, blockchain layer, and cloud layer. After analyzing the relationships between training model error, IoT device associations, and data distribution, they propose distance-aware hierarchical federated learning (DAHFL) to %optimize IoT device associations using the data distribution of each IoT device to 
minimize training model error. 
%\qh{A theoretical analysis of the bottleneck in model accuracy caused by the imbalanced data distribution is given. Then the relations between the model error, IoT device associations, and data distribution are modeled. Finally, distance-aware hierarchical federated learning (DAHFL) is proposed to reduce model error by optimizing IoT device associations based on the data distribution of each IoT device. (too long. can you rephrase these sentences into one or two sentences?)}
In all the above-mentioned studies, blockchain is deployed as an additional part of the HFL structure, without considering the extra system overhead.

%\textbf{\textit{PLEASE replace the following two paragraphs with one paragraph summarizing the difference (or shortage) of the existing work, mainly [3],[4], from  our work.}}
%In conclusion, the existing studies about blockchain-based HFL are limited. 
%limited studies have addressed the single-point failure, as well as the security and privacy threats in HFL using blockchain.  
Different from the existing BHFL studies \cite{sarhan2022hbfl,zhang2021bc}, our proposed framework is designed to be energy-efficient by empowering edge servers to form the blockchain system without employing additional miners. To tackle the challenges brought by the blockchain, %such as the motivation of clients, model plagiarism, and energy cost, 
we design a lightweight consensus algorithm utilizing intermediate model aggregation results for leader election, which is enhanced with plagiarism- and bribery-free schemes, and a Stackelberg game-based incentive mechanism to encourage the participation of clients.

\section{System Overview}
\label{sec:sysmod}

\begin{figure*}[ht]%
   \centering
      \includegraphics[width=1.9\columnwidth]{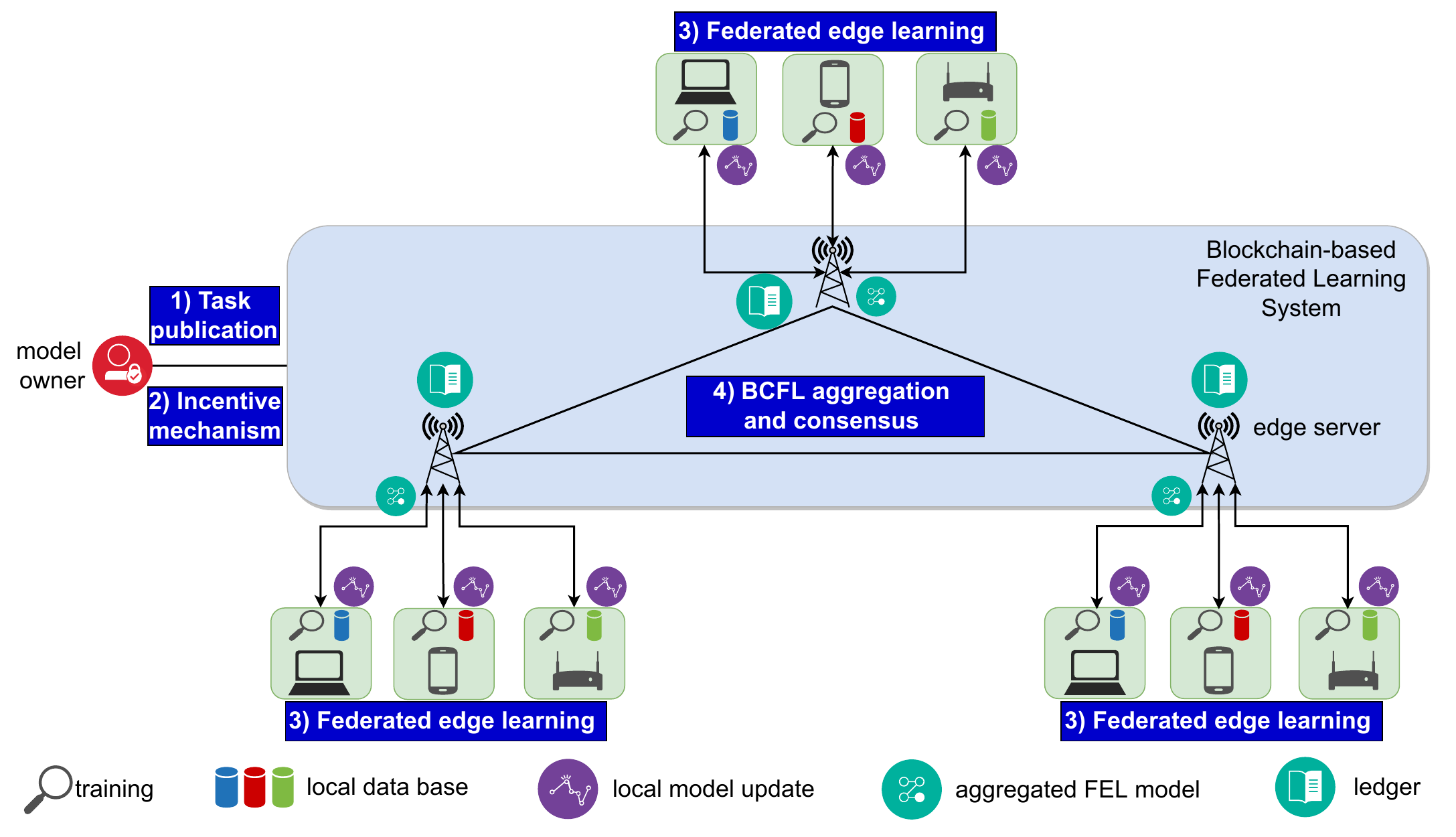}
      %\vspace{0cm}
      \caption{The topology of our proposed system.}
      \label{fig:modelstructure}
\end{figure*}

% \begin{figure}[t]
% 	  %\vspace{-0.4cm}
%       \centering
%       \includegraphics[width=1\columnwidth]{modelstructure.pdf}
%       %\vspace{0cm}
%       \caption{System overview}
%       \label{fig:modelstructure}
% \end{figure}
%\subsection{System Overview}
As illustrated in Fig. \ref{fig:modelstructure}, our proposed BHFL system has two layers: i) the upper layer is the blockchain-based federated learning (BCFL) system running on edge servers\footnote{For simplicity, we refer to edge servers as BCFL nodes.}, and ii) the bottom layer includes all FEL systems with each consisting of one BCFL node and multiple end devices (i.e., clients). We refer to a BCFL node as \textit{$e_i$} and $E = \{e_1,\cdots,e_i,\cdots,e_N\}$ represents all BCFL nodes in the system, where $N$ is the total number of BCFL nodes. 

\subsection{System Workflow}
Specifically, there are four main procedures in the workflow of the proposed system:

1) \textbf{Task Publication:} A user sends a learning task request to the BCFL network so that all BCFL nodes can receive the information and apply an evaluation scheme to determine whether they accept the task. %This evaluation scheme could vary from individual to individual and be decided by the FEL network. For example, clients in an FEL network will vote on whether to join the task based on whether they have sufficient qualified data and computing resources for the learning task. Since clients within one FEL network could possess various and distinct kinds of data, given a learning task being accepted by an FEL network, this implies there could be only a fraction of all clients in an FEL network working on the same learning task. Meanwhile, other clients are idle or working on different learning tasks published by other requestors. Besides, 
The task information includes the user's identity and learning task description, which  % when the task is published, and other requirements, 
will also be recorded on the blockchain. 

2) \textbf{Incentive Mechanism:} A simple two-stage Stackelberg game-based \cite{simaan1973stackelberg} incentive mechanism is applied between the task publisher (i.e., model owner) and each participating BCFL node before beginning the following FEL procedure to determine rewards being assigned to each BCFL node and the amount of computational resource each BCFL node invests. These rewards are the payment of the FL task for an FEL cluster consisting of the BCFL node and associated end devices as compensation for training local models using their own data. 

3) \textbf{Federated Edge Learning:} The BCFL node will distribute the learning task or updated global model to connected clients in each participating FEL cluster. Afterward, clients will prepare the training data, perform the local training and upload their model updates to their connected BCFL node. The BCFL node will aggregate received local model updates into an FEL model which will be returned to clients to begin another iteration\footnote{The aggregation method used in each FEL system could be different and determined by the FEL system itself. Some popular and well-known methods include federated stochastic gradient descent (FedSGD) and federated averaging (FedAvg) \cite{mcmahan2017communication}.}.

4) \textbf{Global Model Aggregation and Consensus in BCFL:} After a specific period of FEL, the BCFL system will generate a block, containing the latest FEL models from all BCFL nodes and the resulting updated global model, and broadcast it to be recorded on the blockchain. This process is taken place at a constant rate in a round-by-round manner, and every BCFL node will participate in this process. In each round, all BCFL nodes perform the consensus mechanism to determine a leader for block generation. %It is worth noting that some of these latest FEL models could have reached convergence. Furthermore, there could be FEL networks that are not working on any learning tasks. For instance, an FEL network finishes all its previous learning tasks and awaits appropriate learning tasks to join. However, BCFL nodes in these idle FEL networks are still required to participate in the BCFL block generation iteration processes even if they do not have FEL models to exchange with other nodes. They will receive all FEL models from other BCFL nodes to perform the consensus algorithm. 
 Each BCFL node first broadcasts its FEL model and then computes the updated global model using all received FEL models. The consensus mechanism is based on the similarity between each FEL model and the updated global model. %Since multiple learning tasks are running in the system, the consensus algorithm is applied based on one single learning task in each BCFL block generation iteration. This single learning task is chosen based on a simple time rule. 
 And the BCFL node submitting the FEL model with the highest similarity will become the leader, which will be responsible for the generation of a new block %aggregating all FEL models from the remaining learning tasks to updated global models. Then it will generate the block ledger 
storing important information, such as the leader node identity, all submitted FEL models, and the updated global model. %Finally, it will broadcast the ledger to all other BCFL nodes. 
%Furthermore, the leader BCFL node will get a fixed payment for generating the block. These rewards come from the corresponding learning task publisher's budget. 
Once all BCFL nodes have received and verified the new block through broadcasting, they will add it to their own ledger to finish the blockchain updating.
%proceed to the incentive mechanism. % or termination procedure. 
More details about the consensus algorithm will be provided in Section~\ref{sec:consensus}.

The learning task will be terminated once the loss function of the global model is minimized or the user-specified completion time is expired.

\subsection{Adversary Models}
\label{sec:adversary}
In our proposed BHFL system, there exist potential adversaries impeding the implementation of the above four main procedures, especially during the consensus process. In the below, we define the assumptions, goals, and capabilities of malicious BCFL nodes as adversary models, including model plagiarism and bribery voting.
\subsubsection{Model Plagiarism}
In the context of consensus algorithm in BCFL, all BCFL nodes exchange their FEL models initially to facilitate model aggregation, leader election, and verification. However, due to the distributed manner of the network, achieving simultaneous model exchanges across all nodes presents a challenge. Certain nodes may receive models earlier than others, creating an opportunity for malicious nodes to engage in model plagiarism. These nodes can exploit this time discrepancy to fabricate models as their own by plagiarizing other received models, enabling them to obtain rewards without utilizing their own computing resources.

\textbf{Assumption:} Adversaries can include any BCFL node within the BCFL system, without any additional resource or equipment requirements. The adversaries decide to cease the FEL process to save their own computing resources.

\textbf{Goals:} After receiving FEL models from other benign BCFL nodes, an adversary utilizes one or more of the received models to fabricate a new FEL model and present it as its own learned FEL model. This can be done by direct copying or editing a received model or merging several received models. Subsequently, the adversary broadcasts the plagiarized model as expected. Since there is no criteria to distinguish the plagiarized model, by following this approach, the adversary can acquire rewards for the plagiarized model for free without being recognized by other benign BCFL nodes.

\textbf{Capabilities:} For an adversary to benefit from the plagiarized FEL model and obtain rewards, they must possess adequate network conditions to timely broadcast the plagiarized models before the model aggregation occurs. Since the adversaries can be any BCFL node, it is capable of receiving, copying or editing, and broadcasting the FEL models. 

\subsubsection{Bribery Voting}\label{subsec:bribery}
Within the proposed consensus algorithm, the process of determining the leader node involves vote tallying. However, in this scenario, a malicious node has the potential to incentivize other BCFL nodes to cast dishonest votes, thus disrupting the accuracy of the tallying results.

\textbf{Assumption:} For adversaries to successfully bribe other BCFL nodes, they must either have access to ample rewards or ensure that the act of bribery is financially beneficial (where the earned rewards exceed the cost of bribery). Additionally, the proportion of corruptible BCFL nodes is assumed to be lower than 50\%.

\textbf{Goals:} The adversaries gain unauthorized profits by engaging in bribery with other BCFL nodes, influencing them to vote in their favor and increasing their chances of being selected as leader nodes. Furthermore, the adversaries may choose to bribe other BCFL nodes to cast dishonest votes (random votes), thereby disrupting the integrity of the voting process.

\textbf{Capabilities:} The adversaries possess the ability to compromise a BCFL node, resulting in its willingness to faithfully cast votes in favor of the adversaries. Furthermore, they have covert means to distribute bribery rewards to other BCFL nodes without detection.

\section{PoFEL: Energy Efficient Consensus Algorithm Design}
\label{sec:consensus}
In this section, we will first discuss our proposed consensus mechanism PoFEL, followed by the detailed design of three main sub-procedures preventing adversarial attacks defined in Section \ref{sec:adversary}. 
%\textbf{Specifically, in section \ref{sec:HCDS} and \ref{sec:BTS} we design HCDS and BTSV to prevent attacks from the adversaries of model plagiarism and bribery voting seperately.} 

The overall process of our proposed consensus mechanism, named \textit{Proof of Federated Edge Learning (PoFEL)}, is summarized in Algorithm~\ref{al_1}. 
%We define $E=\{e_{1},e_{2},...,e_{N}\}$ as the set of all BCFL nodes, where $N$ is total number of BCFL nodes. 
%The BCFL system will apply the consensus mechanism every certain period of time. We also refer to the process where each time consensus mechanism is applied as a new block generation round. Assuming all BCFL nodes participate in the learning task, 
Denote $W(k)=\{w^{{1}}(k),\cdots,w^{{i}}(k),\cdots,w^{{N}}(k)\}$ as the set of FEL models, where $w^i(k)$ is the aggregated model of node $e_i$ at BCFL round $k$, and $N$ is the total number of BCFL nodes. %At new block generation round $k$, first, each 
In each round, every BCFL node $e_i$ runs the \textit{Hash-based Commitment and Digital Signature (HCDS)} function with its FEL model $w^i(k)$ as input (Line 2), which will lead to the broadcast of all FEL models and the verification of all received FEL models. 
%Then, we use $ts$ to denote the selected learning task being used as the criteria of our proposed \textit{Proof of Similarity (ME)}, and  $W^{ts}(k)=\{w_1^{ts}(k),w_2^{ts}(k),...,w_{n_{ts}}^{ts}(k)\}$ to represent the set of FEL models working on learning task $ts$ at new block generation round k, where $n_{ts}$ is the total number of BCFL nodes (FEL models) working on task $ts$. 
Then all BCFL nodes conduct \textit{Model Evaluation (ME)} to derive the updated global model $gw(k)$ from $W(k)$ (Line 3). Next, each BCFL node $e_i$ computes cosine similarities between each FEL model $w^i(k) \in W(k)$ and the updated global model $gw(k)$. Based on these similarities, $e_i$ will cast a vote $e^{i}_{best}(k)$ for the BCFL node with the highest similarity, along with the prediction of the voting result, denoted by $P^{i}(k)$ (Line 4).
%In addition to the vote, each BCFL node $e_i$ will also 
%node $e_i$ is required to predict the percentage of votes each node will receive, denoted by
%proportions of voters who will endorse each of the $N$ BCFL nodes as the leader node. 
%These predictions form a prediction set 
Specifically, $P^{i}(k)=\{p_1^{i}(k),\cdots, p_{N}^{i}(k)\}$ indicates the percentage of votes each node will receive. After gathering submissions from all nodes into $E_{best}(k)=\{e^{1}_{best}(k), e^{2}_{best}(k), \cdots, e^{N}_{best}(k) \}$ and $P(k)=\{P^{1}(k), P^{2}(k), \cdots, P^{N}(k)\}$, the smart contract for vote tally will employ the \textit{Bayesian Truth Serum-based Voting (BTSV)} scheme to generate a score that represents the honesty for each BCFL node. Such a scheme utilizes Bayesian Truth Serum (BTS) to evaluate how common each BCFL node is voted compared to the common predictions in $P(k)$. The cumulative historical score is then used to calculate the voting weight of each BCFL node and tally $E_{best}(k)$ based on these weights so that the leader node $e^*(k)$ at round $k$ can be determined (Line 5). 
%The leader node will be responsible for using aggregation function \textit{AG} to aggregate FEL models of the remaining learning tasks, and output updated global models set $GW(k)=\{gw^{t_1}(k), gw^{t_2}(k), ..., gw^{t_{n_t}}(k)\}$ where $n_t$ is the total number of learning tasks operating in the BCFL system (Line 6 in Algorithm~\ref{al_1}). 
Eventually, the leader node will package all important information into the \textit{New Block} and broadcast it to the whole BCFL system (Lines 6-7). 
 
\begin{algorithm}
\caption{Overall Process of PoFEL } %算法的名字
\label{al_1}
%\hspace*{0.02in} {\bf Input:} %算法的输入， \hspace*{0.02in}用来控制位置，同时利用 \\ 进行换行
%the set of local model updates in round $k$: $f(w)$\\
%\hspace*{0.02in} {\bf Output:} %算法的结果输出
%updated global model contained in new block: $G(w)$ 

\begin{algorithmic}[1]
\REQUIRE $W(k)$

\WHILE{round $k$}
\STATE run HCDS ($w^i(k)$) at every $e_i$
\STATE $(e^{i}_{best}(k)$, $P^{i}(k)$, $gw(k)) \gets$ ME ($W(k)$) at every $e_i$

\STATE each $e_i$\ submits $e^{i}_{best}(k)$ and $P^{i}(k)$ to smart contract

\STATE $e^*(k) \gets$ \uppercase{BTSV} $(E_{best}(k),P(k))$

%\STATE $GW(k) \gets$ AG($W(k)$) at $e^*(k)$ 
% 
\STATE $New\ Block\ \gets$ $e^*(k), W(k), gw(k)$

\STATE $e^*(k)$ broadcasts $New\ Block$

\ENDWHILE

\end{algorithmic}
\end{algorithm}

% \begin{algorithm}
% \caption{HCDS}\label{al_2}
% %\hspace*{0.02in} {\bf Input:} %算法的输入， \hspace*{0.02in}用来控制位置，同时利用 \\ 进行换行
% %the set of local model updates in round $k$: $f(w)$\\
% %\hspace*{0.02in} {\bf Output:} %算法的结果输出
% %updated global model contained in new block: $G(w)$ 

% \begin{algorithmic}[1]
% \REQUIRE $w^{i}(k)$

% \STATE $(R^{i}(k),D^{i}(k), TAG^{i}(k)) \gets$ HCDS($w^{i}(k)$) at $e_i$

% % TAG^{e_i}(k)=DS_{S_{e_i}}(D^{e_i}(k))
% \STATE $e_i$\ broadcasts\ $D^{i}(k)$ and $TAG^{i}(k)$ 
% \STATE Commit(${D(k,i)}$, ${TAG(k,i)}$) at $e_i$ 

% \STATE $e_i$\ broadcasts\ $R^{i}(k)$, $W^{i}(k)$and $TAG^{i}(k)$
% \STATE Reveal(${R(k,i)}$, ${W(k,i)}$, ${TAG(k,i)}$) at $e_i$ 

% \end{algorithmic}
% \end{algorithm}
\begin{algorithm}
\caption{HCDS}\label{al_2}
%\hspace*{0.02in} {\bf Input:} %算法的输入， \hspace*{0.02in}用来控制位置，同时利用 \\ 进行换行
%the set of local model updates in round $k$: $f(w)$\\
%\hspace*{0.02in} {\bf Output:} %算法的结果输出
%updated global model contained in new block: $G(w)$ 

\begin{algorithmic}[1]
\REQUIRE $w^{i}(k)$
\STATE \textbf{Initialize}: $r^{{i}}(k)$
%\STATE $(R^{i}(k),D^{i}(k), TAG^{i}(k)) \gets$ HCDS($w^{i}(k)$) at $e_i$
\STATE $d^{i}(k) = H(r^{i}(k) \ ||\  w^{i}(k))$
\STATE $tag^{i}(k) = DSign(d^{i}(k), SK_{i})$
% TAG^{e_i}(k)=DS_{S_{e_i}}(D^{e_i}(k))
\STATE broadcast\ $d^{i}(k)$ and $tag^{i}(k)$ 
\STATE receive\ ${D(k,i)}$ and $TAG(k,i)$
%\STATE Commit(${D(k,i)}$, ${TAG(k,i)}$) at $e_i$ 
\FOR{$l = 1$ to $N$ and $l \neq i$}
    
    \IF{$DVerify(tag^{l}(k), PK_{l}, d^{l}(k))$ = Accepted}
        \STATE Continue
    % \Else
    %     \While{$DS_{S_{e_i}}(tag_{m_{digest}^{e_i}})$ $\neq$ $m_{digest}^{e_i}$}
    %     \State $e_i$ rebroadcast $m_{digest}^{e_i}$
    %     \EndWhile
    \ENDIF
    
\ENDFOR
\STATE broadcast\ $r^{i}(k)$, $w^{i}(k)$, and $tag^{i}(k)$
\STATE receive\ $R(k,i)$, $W(k,i)$, and $TAG(k,i)$
%\STATE Reveal(${R(k,i)}$, ${W(k,i)}$, ${TAG(k,i)}$) at $e_i$ 
\FOR{$l = 1$ to $N$ and $l \neq i$}
    
    \IF{$H(r^{l}(k) \ ||\  w^{l}(k))$ = $d^{l}(k)$}
    \IF{$DVerify(tag^{l}(k), PK_{l}, H(r^{l}(k) \ ||\  w^{l}(k)))$ = Accepted}
    
    \STATE Continue
    \ENDIF
    \ENDIF
    
\ENDFOR
\end{algorithmic}
\end{algorithm}

% \begin{algorithm}
% \caption{HCDS}\label{al_3}
% \begin{algorithmic}[1]
% \REQUIRE $W^{i}(k)=\{w_1^{{i}}(k),w_2^{{i}}(k),...,w_{n_{i}}^{{i}}(k)\}$
% \STATE \textbf{Initialize}: $R^{i}(k)$$\enspace=\enspace$$\{ r_1^{{i}}(k),r_2^{{i}}(k),...,r_{n_{i}}^{{i}}(k) \}$,\\$D^{i}(k)$$\enspace=\enspace$$\{d_1^{{i}}(k),d_2^{{i}}(k),...,d_{n_{i}}^{{i}}(k)\}$,\\$TAG^{i}(k)$$\enspace=\enspace$$\{tag_1^{{i}}(k),tag_2^{{i}}(k),...,tag_{n_{i}}^{{i}}(k)\}$
% \FOR{j in $n_i$}
% \STATE $d^{i}_j(k) = H(r^{i}_j(k) \ ||\  w^{i}_j(k))$
% \STATE $tag^{i}_j(k) = DSign(d^{i}_j(k), SK_{e_i})$
% \ENDFOR
% \RETURN $R^{i}(k),D^{i}(k), TAG^{i}(k)$
% \end{algorithmic}
% \end{algorithm}
% \begin{algorithm}
% \caption{HCDS}\label{al_3}
% \begin{algorithmic}[1]
% %\REQUIRE $W^{i}(k)=\{w_1^{{i}}(k),w_2^{{i}}(k),...,w_{n_{i}}^{{i}}(k)\}$
% \STATE \textbf{Initialize}: $r^{{i}}(k)$, $d^{{i}}(k)$, $tag^{{i}}(k)$

% \STATE $d^{i}(k) = H(r^{i}(k) \ ||\  w^{i}(k))$
% \STATE $tag^{i}(k) = DSign(d^{i}(k), SK_{e_i})$

% \RETURN $r^{i}(k),d^{i}(k), tag^{i}(k)$
% \end{algorithmic}
% \end{algorithm}

\subsection{Hash-based Commitment and Digital Signature}
\label{sec:HCDS}
In our proposed PoFEL consensus algorithm, the broadcast of FEL models among BCFL nodes can lead to the issues of model plagiarism, where a malicious node might copy another node's submission with or without editing, or even fabricate an FEL model via merging multiple models from others and then claims the plagiarized model. %We will provide more details in the security analysis in section \ref{sec:security}.
To overcome this challenge, we devise a Hash-based Commitment scheme integrated with Digital Signature (HCDS) that enforces each BCFL node to strictly follow the BHFL protocol and submit their own FEL models.
%commit to its own FEL model without being known to others, broadcast the committed FEL model, and then reveal the committed FEL model at a later stage. Therefore, a BCFL node cannot plagiarize an FEL model and commit to it before revealing the FEL model. 
The detailed process is summarized in Algorithm~\ref{al_2} and illustrated in Fig. \ref{fig:flowchart} for better understanding.

% First, each BCFL node $e_i$ will apply the HCDS to its possessed FEL model set $W^{i}(k)= \{w_1^{{i}}(k),w_2^{{i}}(k),...,w_{n_{i}}^{{i}}(k)\}$, where $n_i$ is the total number of FEL models possessed by BCFL node $e_i$ (Line 1 in Algorithm~\ref{al_2}). The details and workflow of the HCDS function are shown in Algorithm~\ref{al_3} and Fig. \ref{fig:HCDS}. 
To begin with, each BCFL node $e_i$ initializes a fixed-length random nonce $r^{i}(k)$ at round $k$ (Line 1), 
% Meanwhile, the HCDS will also generate the digest set $D^{i}(k)=\{d_1^{{i}}(k),d_2^{{i}}(k),...,d_{n_{i}}^{{i}}(k)\}$ and the tag set $TAG^{i}(k)=\{tag_1^{{i}}(k),tag_2^{{i}}(k),...,tag_{n_{i}}^{{i}}(k)\}$. The values of elements in the digest and tag set are initially set to empty (Line 1 in Algorithm~\ref{al_3}). 
which will be concatenated with the FEL model $w^{i}(k)$ into $r^{i}(k) \ ||\  w^{i}(k)$. And then this concatenation will become the input of the Hash function $H$ to derive the digest $d^{i}(k)$ (Line 2). The digest $d^{i}(k)$, together with the private key $SK_{i}$, will be further passed to the signing algorithm $DSign$ to generate the tag $tag^{i}(k)$ (Line 3). %iven the digest $d^{i}(k)$ produced and the private key $SK_{i}$ of the BCFL node $e_i$ (Line 3). 
%We refer to these steps as \textit{Preparation Stage} (Line 1-3 in Algorithm~\ref{al_2}). 
Next, each BCFL node $e_i$ will broadcast its digest $d^{i}(k)$ and tag $tag^{i}(k)$ (Line 4). After all BCFL nodes finish broadcasting, each BCFL node $e_i$ should receive digests and tags from other BCFL nodes, denoted by ${D(k,i)}=\{d^{l}(k) \mid l \neq i \}$ and ${TAG(k,i)}=\{tag^{l}(k) \mid l \neq i \}$, respectively (Line 5). 
%The details and workflow of the Commit function are shown in Algorithm~\ref{al_4} and Fig. \ref{fig:Commit}. 
%We use ${D(k,i)}=\{d^{l}(k) \mid l \neq i \}$ to denote the received digests set at BCFL node $e_i$, where $d^{l}(k)$ represents the digest possessed by the sender BCFL node $e_l$. Similarly, we utilise ${TAG(k,i)}=\{tag^{l}(k) \mid l \neq i \}$ to represent the received tags set at BCFL node $e_i$. 
Then $e_i$ will run the verifying algorithm $DVerify$ to check whether each tag $tag^{l}(k)$ and the associated digest $d^{l}(k)$ are valid using the public key $PK_{l}$ of the sender node $e_l$ (Lines 6-10). We refer to the above steps as \textit{Commit Stage}. % (Lines 1-10). 

For clarity, the workflow of this stage is presented in Fig. \ref{fig:Commit}, where the upper half part corresponds to applying Hash function $H$ and $DSign$ successively; % to the concatenation of its tag $r^i(k)$ and FEL model $w^i(k)$. 
while the bottom half refers to $e_i$ utilizing $DVerify$ to validate each received digest $d^l(k)$ and tag $tag^l(k)$. % Initialization, broadcast, and reception of tags and digests are omitted in the flowchart.

Afterwards, each BCFL node $e_i$ will broadcast its random nonce $r^{i}(k)$, FEL model $w^{i}(k)$ and tag $tag^{i}(k)$ (Line 11), followed by receiving random nonces, FEL models, and tags from other BCFL nodes (Line 12). 
%The details and workflow of the Reveal function are shown in Algorithm~\ref{al_5} and Fig. \ref{fig:Reveal}. 
The random nonces set received by $e_i$ is denoted as $R(k,i)=\{r^{l}(k) \mid l \neq i \}$ and the set of FEL models is $W(k,i)=\{w^{l}(k) \mid l \neq i \}$. Finally, $e_i$ applies the Hash function to the concatenation of received random nonce $r^{l}(k)$ and FEL model $w^{l}(k)$, and the output will be compared with the previously received digest $d^{l}(k)$. If two values match, the $DVerify$ algorithm will be utilized again to check the validity of tag $tag^{l}(k)$ and the output of the Hash function given the public key $PK_{l}$ of the sender BCFL node $e_l$ (Lines 13-19). We refer to these steps as \textit{Reveal Stage}, % (Lines 11-19), 
and the corresponding workflow is presented in Fig. \ref{fig:Reveal} with broadcast and reception steps omitted.

Once the HCDS scheme is accomplished, each BCFL node $e_i$ completes the broadcast and verifies all FEL models. It is worth noting that in addition to the common Digital Signature algorithm with the Hash function, the only extra overhead in our scheme is generating a fix-length random nonce for each FEL model, and thus it is a relatively lightweight scheme.

\begin{figure}[htp]
\centering
% \subfigure[HCDS]{
% \label{fig:HCDS}
% \includegraphics[width=0.45\textwidth]{HCDS_a.pdf}}
\subfigure[Commit Stage]{
\label{fig:Commit}
\includegraphics[width=0.45\textwidth]{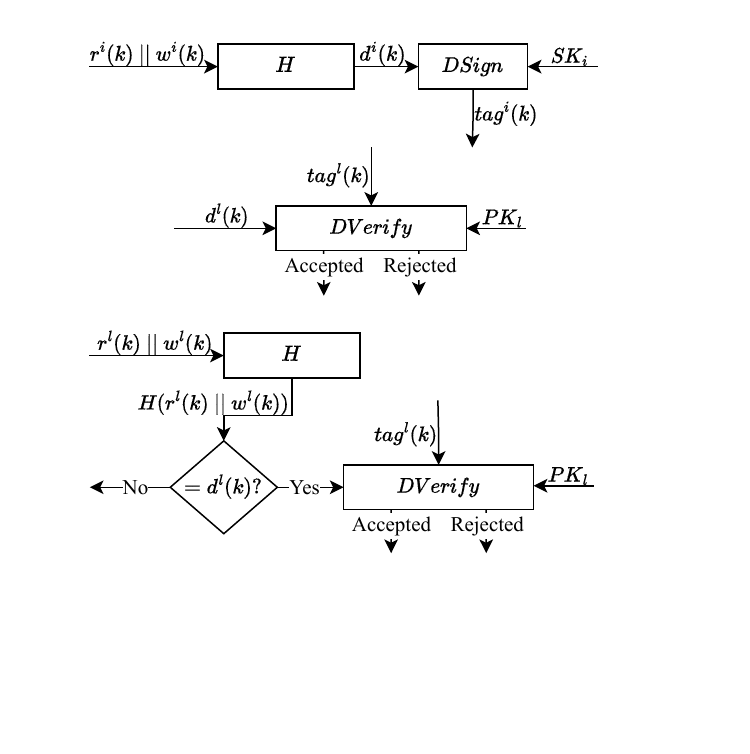}}
\subfigure[Reveal Stage]{
\label{fig:Reveal}
\includegraphics[width=0.45\textwidth]{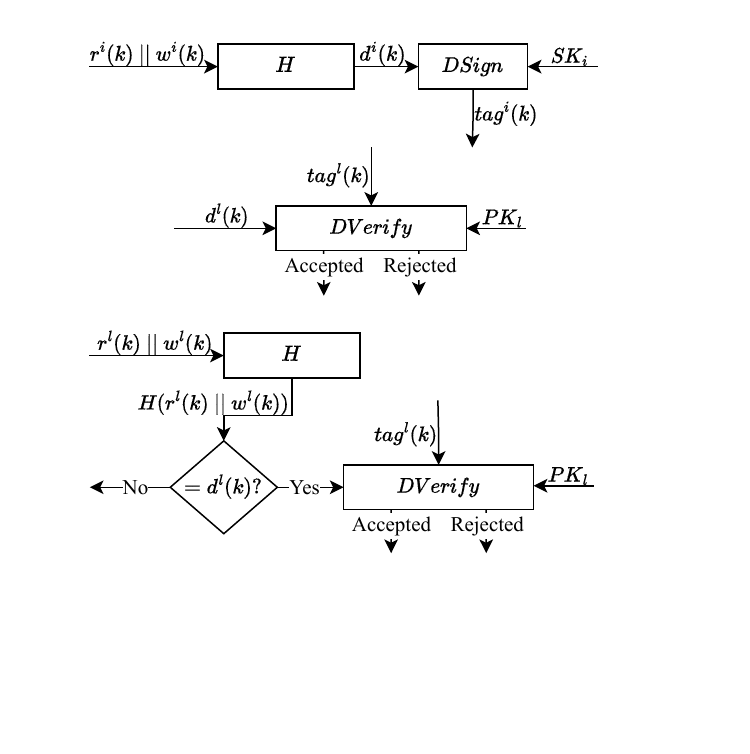}}
\caption{Workflow of Commit Stage and Reveal Stage in HCDS.}
\label{fig:flowchart}
\end{figure}

\subsection{Model Evaluation}
%In our PoFEL consensus, 
Next, we design a lightweight algorithm that reuses the computational resource consumed in finishing FL tasks, namely Model Evaluation (ME), to determine the leader node in each BCFL round based on the similarities between each FEL model and the updated global model.
% As mentioned in the Task publication procedure in section~\ref{sec:sysmod}, the time information of each learning task being published is recorded by each BCFL node $e_i$. Based on the publish time of FL tasks, each BCFL node $e_i$ maintains a time of task set $T^{time} (k) = \{t_1^{time}(k),t_2^{time}(k),...,t_{n_t}^{time}(k)\}$ in ascending time order. The learning task $ts$ used by the ME at each iteration is selected based on the time of task set $T^{time}(k)$ successively and iteratively. 

The details of the ME function are reported in Algorithm~\ref{al_6}. Once the broadcast and verification of all FEL models are completed, each BCFL node $e_i$ will first perform the aggregation of all FEL models in $W(k)$ according to
\begin{align}
    gw(k)=\frac{\sum_{m=1}^{N} \left | DS_m \right | w^m(k)}{\left | DS \right | },
    \label{aggtask}
\end{align}
where $DS=\bigcup_{m=1}^{N} {DS}_m$ (Line 1), with  ${DS}_m$ representing the combination of all clients' local datasets associated with the BCFL node $e_m$,  $\left | DS_m \right |$ is the data size of ${DS}_m$,  %The edge server $e_m$ possesses the FEL model $w^m(k)$. 
and $\left | DS \right |$ refers to the total data size of $DS$.

With the help of ME, each node $e_i$ can calculate cosine similarity\footnote{Other similarity measures can also be employed in our scheme, such as Euclidean distance and $l_2$ norm.} between each FEL model $w^m(k)$ and the updated global model $gw(k)$ to measure their difference %of their directions
\begin{align}
    s_m=\frac{\left \langle w^m(k), gw(k)\right \rangle }{\left \| w^m(k) \right \| \left \| gw(k) \right \|},
    \label{cossim}
\end{align}
where $\left \langle \cdot, \cdot \right \rangle$ means dot product, and $\left \| \cdot \right \|$ refers to $l_2$ norm.

Eventually, of all computed cosine similarities, each BCFL node $e_i$ will cast a vote on the BCFL node with the highest cosine similarity value at round $k$, indicating that its FEL model is the closest to $gw(k)$. We denote the index of this BCFL node as $e^i_{best}(k)$ (Lines 2-5). %In other words, the FEL model from such a BCFL node has the highest cosine similarity value. 

Next, we define  $ p_j^i(k) \in \left [ 0,1 \right ] $ to represent node $e_i$'s prediction about the fraction of voters voting for node $e_j$ as the leader. It is clear that $\sum_{j=1}^{N} p_j^i(k)=1 $. 
Then every $e_i$ can derive the prediction set $P^{i}(k)=\{p_1^{i}(k), p_2^{i}(k), \cdots, p_{N}^{i}(k)\}$ and submit it to the smart contract for tallying votes (Lines 6-13). %Given index $e_{best}^{i}(k)$, 
Now, every $e_i$ will assign a value to $p_j^i(k)$: if $j = e_{best}^{i}(k)$, $p_j^i(k) =G_{max}$ that is close but not equal to 1 (Line 8);  otherwise, $p_j^i(k)=G_{min}$ that is close to 0  (Line 10). % but not equal to 0 (Line 10). 
% \textbf{Since the prediction $p_j^{i}(k)$ at the log component in condition (\ref{predictscore}) will cause errors if $p_j^{i}(k)$ equals zero, here each BCFL node $e_i$ will predict the fraction of votes that the edge server with index $e_{best}^{i}(k)$ will receive to be 0.99. Also due to the condition (\ref{predict}) required by BTSV, the fraction of votes for edge server with index other than $e_{best}^{i}(k)$ is divided equally. In other words, all other predictions will be set to $\frac{1-0.99}{N-1}$ (Lines 6-13).}
\begin{algorithm}
\caption{ME}\label{al_6}

\begin{algorithmic}[1]
\REQUIRE $W(k)$, $DS$
%\STATE \textbf{Initialize}: $P^{i}(k)$$\enspace=\enspace$$\{p_1^{i}(k)=0, p_2^{i}(k)=0, \cdots,p_{N}^{i}(k)=0\}$
\ENSURE $e_{best}^{i}(k)$, $P^{i}(k)$, $gw(k)$
\STATE $gw(k) \gets$ update via (\ref{aggtask})
%\STATE $s_{max} = -\infty$
\FOR{$m = 1$ to $N$}
    \STATE $s_m \gets$ update via (\ref{cossim})
    % \IF {$s_m > s_{max}$}
    % \STATE $e_{best}^{i}(k) \gets m$
    % \STATE $s_{max} \gets s_m$
    % \ENDIF
\ENDFOR
\STATE $e_{best}^{i}(k)\gets$ index of the BCFL node with highest $s_m$ voted by BCFL node $e_i$ at round $k$
%\STATE $best \gets e_{best}^{i}(k)$
%\STATE $p_{best}^{i}(k) \gets 0.99$
\FOR{$j = 1$ to $N$}
\IF{$j = e_{best}^{i}(k)$}
\STATE $p_{j}^{i}(k) \gets G_{max}$
\ELSE
\STATE $p_{j}^{i}(k) \gets G_{min}$
\ENDIF
\ENDFOR
\RETURN $e_{best}^{i}(k)$, $P^{i}(k)$, $gw(k)$

\end{algorithmic}
\end{algorithm}
\subsection{Bayesian Truth Serum-based Voting (BTSV) for Leader} 
\label{sec:BTS}
Now, we need to tally votes from all BCFL nodes and elect the leader node $e^*(k)$ with the highest votes. As analyzed in Section \ref{subsec:bribery}, however, a plain voting scheme can suffer from the threat of bribery voting, where a malicious node may corrupt or bribe some other nodes to be voted, interrupting the tallying integrity and leading to biased results. Therefore, BTSV is developed to count votes in the form of a smart contract. BTS \cite{prelec2004bayesian} is a scoring approach that assigns higher scores to voters picking the common options. The underlying rationale is if a voter indeed has a preference for an option, s/he tends to believe others have similar preferences. % (assigns a higher score to the edge server that casts a vote frankly). 
As for a voter with a relatively low cumulative score, predefined penalties will be incurred, such as reward reduction %acquired through the incentive mechanism 
or lower voting weight %being assigned with the vote 
during tallying. 

In our framework, we design BTSV that associates the cumulative score with the voting weight of each BCFL node $e_i$ when tallying and thus motivates BCFL nodes to vote honestly. Initially, the smart contract processes the vote $e_{best}^{i}(k)$ and vote predictions $P^{i}(k)$ from BCFL node $e_i$ to compute its BTS score. The BTS score reflects the behavior of the BCFL node, rewarding honest nodes with higher scores and penalizing malicious nodes with lower scores (further analysis is provided in Section \ref{sec:briberyvoting}). Subsequently, the system determines the voting power of each BCFL node $e_i$ by considering its cumulative historical BTS score. A sigmoid function is employed to transform this cumulative BTS score into a weight, ensuring the voting weight is appropriately scaled and constrained within a specific range. % (greater than 0 and lower than a maximum). 
Specifically, when the cumulative BTS score is lower, the weight approaches 0, and as the cumulative BTS score increases, the weight also increases to approaching the maximum value. This mechanism ensures that nodes with higher cumulative BTS scores have stronger voting power, while those with lower scores have a limited influence on the voting process. Finally, the votes are tallied, and the leader node with the highest adjusted votes is selected. The detailed procedures are summarized in Algorithm~\ref{al_7}.
%The BCFL node $e_i$ with a relatively low cumulative score could get predefined penalties such as reductions in rewards acquired through the incentive mechanism or lower weight being assigned with the vote during tallying. The BTSV provides a scoring system that 
 
\begin{algorithm}
\caption{BTSV}\label{al_7}

\begin{algorithmic}[1]
\REQUIRE $E_{best}(k)$, $P(k)$
\ENSURE $e^{*}(k)$

\FOR{$j = 1$ to $N$}
\FOR{$i = 1$ to $N$}
    \IF {$j$ = $e^i_{best}(k)$}
    \STATE $A_j^i(k)\gets$ 1
    \ELSE
    \STATE $A_j^i(k)\gets$ 0
    \ENDIF
\ENDFOR
\STATE $\overline{x}_j \gets$  (\ref{xa})
\STATE $\overline{y}_j \gets$ (\ref{ya})

\ENDFOR
\FOR{$i = 1$ to $N$}
\STATE $information \ score \gets$ (\ref{informationscore})
\STATE $prediction \ score \gets$ (\ref{predictscore})
\STATE $score^i(k) \gets$ \eqref{scorei} %$information \ score + prediction \ score$
\STATE $CHS^i(k) \gets$ \eqref{chs}
\STATE $WV^i(k) \gets$ \eqref{wv}
\ENDFOR
\FOR{$j = 1$ to $N$}
\STATE $advotes_j \gets$ \eqref{advote}
\ENDFOR
\STATE $e^{*}(k) \gets $index of the BCFL node with the highest $advotes_j$ at round $k$
% \STATE $\overline{x}_{max} \gets -\infty$
% \FOR{$j = 1$ to $N$}
% \IF {$\overline{x}_j > \overline{x}_{max}$}
%     \STATE $e^{*}(k) \gets $ edge server with index $j$
%     \STATE $\overline{x}_{max} \gets \overline{x}_j$
% \ENDIF
% \ENDFOR
\RETURN $e^{*}(k)$
\end{algorithmic}
\end{algorithm}

 %In our proposed BTSV, the vote $e_{best}^{i}(k)$ from each BCFL node $e_i$ endorse a leader node option. 
We first define 
\begin{align}
    A_j^i(k) =
    \begin{cases}
      1 & \text{if $e_i$ votes for $e_j$ at round $k$},\\
      0 & \text{otherwise} .
    \end{cases} \notag 
    %\label{B}
\end{align}
Then we calculate
\begin{align}
    \overline{x}_j=\frac{1}{N}\sum_{i=1}^{N}A_j^i(k),
    \label{xa}
\end{align}
and
\begin{align}
    \overline{y}_j=\exp(\frac{1}{N}\sum_{i=1}^{N}\log{p_j^i(k)}),
    \label{ya}
\end{align}
where $\overline{x}_j$ represents the percentage of votes that BCFL node $e_j$ actually receives and $\overline{y}_j$ means the average predicted percentage of votes that BCFL node $e_j$ receives (Lines 1-11).
The score of the BCFL node $e_i$ for a vote for $e_j$ is defined by $\log{\frac{\overline{x}_j}{\overline{y}_j}}$.
Thus, the information score for the BCFL node $e_i$ is calculated as follows
\begin{align}
    information \ score = \sum_{j=1}^{N} A_j^i(k)\log{\frac{\overline{x}_j}{\overline{y}_j}}, 
    \label{informationscore}
\end{align}
and the prediction score is denoted by
\begin{align}
    prediction \ score = \alpha \sum_{j=1}^{N}\overline{x}_j\log{\frac{p_j^i(k)}{\overline{x}_j}}.
    \label{predictscore}
\end{align}
The total score of $e_i$ at round $k$, denoted as $score^i(k)$, is calculated as follows
\begin{align}
score^i(k)& = information \ score + prediction \ score, \notag \\ 
&= \sum_{j=1}^{N} A_j^i(k)\log{\frac{\overline{x}_j}{\overline{y}_j}} + \alpha \sum_{j=1}^{N}\overline{x}_j\log{\frac{p_j^i(k)}{\overline{x}_j}}. 
    \label{scorei}
\end{align}
We let $\alpha = 1$ so that this vote tallying process can be regarded as a zero-sum game. After deriving $score^i(k)$ for $e_i$, the cumulative historical score (CHS) of $e_i$ at round $k$, denoted as $CHS^i(k)$, is computed by
\begin{align}
CHS^i(k) = \sum_{max(0, k-c)}^{k}score^{i}(k),
    \label{chs}
\end{align}
where $c$ is configurable and refers to the number of past rounds considered in determining CHS. Then a sigmoid function is applied to derive the weight of the vote (WV) that $e_i$ casts at round $k$, denoted as $WV^i(k)$, as follows,
\begin{align}
WV^i(k) = \frac{\beta}{1+e^{-\theta CHS^i(k)-\epsilon}},
    \label{wv}
\end{align}
where $\beta$, $\theta$, and $\epsilon$ are positive coefficients (Lines 12-18). Specifically, $\beta$ determines the upper limit of WV, $\theta$ decides the gradient of WV against CHS (i.e., how quickly WV decreases when CHS drops), and $\epsilon$ assigns the value of WV as 1 when CHS is 0 since it implies that the node has never been rewarded or penalized before. % Therefore, given the above meanings of the coefficients, $\beta$ and $\theta$ must be positive. And Given determined $\beta$ and $\theta$, the value of $\epsilon$ needs to satisfy the condition that the value of voting weight equals 1 (or is very close to 1) when CHS is 0. 
A sigmoid function is chosen as it limits the range of WV with an "S" shape, where too-small or too-large CHS results in a smoother change of WV. This aligns with our objective of strongly rewarding or penalizing nodes when they start showing good or bad behaviors instead of reaching extreme cases. 
%moves toward negative infinity, the WV diminishes at an accelerating rate followed by a gradual slowdown towards 0, yet never reaching it. This choice aligns with  our objective of progressively imposing stronger penalties on a node as its CHS decreases, and subsequently tapering down the penalties as its WV diminishes to a lower value (when adequate penalty has been applied). 
Based on $WV^i(k)$ assigned to each BCFL node voter $e_i$, adjusted tallied votes $advotes_j$ that BCFL node $e_j$ receives are computed by (Lines 19-21)
\begin{align}
advotes_j = \sum_{i=1}^{N}WV^i(k)A_j^i(k).
    \label{advote}
\end{align}
Eventually, the leader node with the highest adjusted tallied votes $advotes_j$ at round $k$ will be selected. We denoted the index of this leader node as $e^{*}(k)$ (Line 22).
%The aforementioned procedures are summarized in Algorithm~\ref{al_7}.
% \subsection{Model Aggregations by Leader}
% After the leader node $e^*(k)$ is selected, the leader node will apply the function AG to aggregate FEL models working on each remaining learning task $t_b$ into updated global models by
% \begin{align}
%     gw^{t_b}(k)=\frac{\sum_{m=1}^{n_{t_b}} \left | DS_m^{t_b} \right | w_m^{t_b}(k)}{\left | DS^{t_b} \right | }, \notag
%     %\label{aggtask1}
% \end{align}
% where index $b \in n_t$, $t_b \neq ts$, and $n_{t_b}$ refer to the total number of FEL models (nodes) working on the learning task $t_b$.

% Eventually, the leader node $e^*(k)$ will gather each updated global model $gw^{t_b}(k)$ and the global model $gw^{ts}(k)$ into updated global models set $GW(k)$. Then, the leader node will append the information, including leader node $e^*(k)$, all FEL models in set $W(k)$, and all updated global models in set $GW(k)$ into the New Block. The New Block will be further broadcast to other BCFL nodes.

\section{Incentive Mechanism}
\label{sec:incent}
To motivate end devices to contribute to FEL tasks continuously and encourage BCFL nodes to maintain the blockchain, block generation rewards to BCFL nodes and FEL rewards to the FEL clusters are required. The leader node selected by the consensus algorithm at each BCFL round will acquire a fixed number of rewards. These rewards come from the task publisher’s budget and could be determined when the learning task is initially published. Thus, in this section, we mainly focus on designing an incentive scheme to distribute FEL rewards through BCFL nodes from the FL task publisher to the clients contributing to the learning task.

% \subsection{Block Generation Rewards}
% The leader node selected by the consensus algorithm at each new block generation round will acquire a fixed number of rewards. These rewards come from the associated learning task publisher’s budget and could be determined when the task publisher publishes the learning task. 

% Since every BCFL node is involved in the consensus algorithm at each new block generation round, BCFL nodes not participating in the currently selected learning task $ts$ will never have a chance to become the leader. Thus, they can never receive the block generation rewards at the current new block generation round.
% However, such a phenomenon does not necessarily lead to a lack of motivation for these BCFL nodes to perform the consensus algorithm. As long as FEL networks (edge servers and their connected clients) work on any learning task, they receive the FEL rewards. In the meantime, BCFL nodes (edge servers) will get a chance to compete for the leader node once the execution of the consensus algorithm iterates to the learning task they are involved in. 

% Also, capable BCFL nodes could be working on multiple learning tasks simultaneously. The more learning tasks a BCFL node is involved in, the more chance it will have to become a leader node. Thus giving block generation rewards only to the leader node will not cause a lack of motivation for BCFL nodes that perform the consensus algorithm but are not participating in the current learning task selected by the consensus algorithm.

%\subsection{FEL Learning Rewards}
The task publisher will provide rewards to motivate FEL clusters to work on the requested learning task. Rewards for an FEL cluster are first determined between the task publisher and the BCFL node, who will distribute rewards to end devices according to a pre-defined rule. An example distribution rule could be based on the CPU cycle frequency spent by each end device. %Such rules are decided by the edge server and its associated end devices, thus each FEL cluster could have different end device rewards distribution rules. 
In our framework, the major challenge of incentive mechanism exists at the blockchain level between the task publisher and BCFL nodes.

To that aim, we model a two-stage Stackelberg game between the task publisher and each participating BCFL node before running FEL, which can be described as below:
%where their equilibrium strategies can be derived. 
%Specifically, the task publisher will determine the total rewards assigned to all edge servers working on its published learning task, and then each edge server will decide the total CPU cycle frequency (cycles per second) of its connected clients according to the received rewards.  Therefore, we could model the process of the two-stage Stackelberg game as follows:
\begin{itemize}
    \item Stage 1: The task publisher will set the total rewards $\delta$ for all engaging BCFL nodes by maximizing its own utility. This utility function is based on its expectation, expenditure (rewards $\delta$), and total CPU cycle frequencies consumed by all engaging BCFL nodes.
    \item Stage 2: After calculating received individual rewards, each BCFL node $e_i$ will the total CPU cycle frequency $f_i$ spent by its connected devices by optimizing its individual utility.
\end{itemize}

Backward induction is adopted to solve the two-stage Stackelberg game. In other words, we will start with optimizing the strategy of Stage 2 first and then analyze the best strategy of Stage 1. %The utility models of task publisher and edge servers are defined below:
\subsection{Utility Function}
The task publisher's expectation of the total rewards $\delta$ paid to all engaging BCFL nodes is closely related to the total CPU cycle frequencies consumed by all FEL clusters, denoted by $F= \sum_{i=1}^{N}f_i$. In other words, given the total CPU cycle frequencies $F$, the task publisher can derive the optimal total rewards, denoted by $\delta^*$, that it believes to be the best total price paid to all participating BCFL nodes. Any rewards different from the optimal will lead to a decrease in the task publisher's utility. This relationship could be captured by a parabola that opens downwards. Thus, the utility of the task publisher is defined as
\begin{align}
    U_{tp} (\delta)= B - (\lambda \frac{\delta}{F}-\varphi)^2,
    \label{Um}
    \end{align}
where $B$, $\lambda$, and $\varphi$ are coefficients reflecting the task publisher's expectation of total rewards $\delta$ given $F$. Mathematically speaking, they determine the shape, location, and opening direction of the parabola in the plane. %Total CPU cycle frequencies F is defined as $F = \sum_{i=1}^{N}f_i$.
    % \begin{align}
    % , \notag
    % %\label{F}
    % \end{align}
    %where $n_{tp}$ is the number of edge servers working on the task publisher's task. 
    The utility model defined in (\ref{Um}) needs to satisfy $U_{tp} (\delta) > 0$ and $\frac{\varphi}{\lambda} > 0$.
   %  \begin{align}
   %  when \ \delta = 0,\ U_{tp} (\delta) > 0,
   %  \label{condition1}
   %  \end{align}
   %  \begin{align}
   % \arg \max_{\delta} U_{tp} (\delta) > 0.
   %  \label{condition2}
   %  \end{align}
   % \begin{align}
   % U_{tp} (\delta) > 0.
   % \label{condition1}
   % \end{align}
    By solving the first constraint, we have $-\sqrt{B} < \lambda \frac{\delta}{F}-\varphi < \sqrt{B}$.
    These inequations refer to the model owner's utility being positive when optimal rewards are given. 
    
%We let $f_i$ be the total CPU cycle frequency of the clients connected to the edge server $e_i$. 
We assume that the total CPU cycles used for training data samples in the learning task are $\mu_i$. Then the energy cost of computing the task could be defined as $\gamma_i\mu_if_i^2$ based on the popular energy consumption model \cite{burd1996processor}. Parameter $\gamma_i$ is related to the CPU's architecture. 
 
\if()
\begin{align}
    C_i = \frac{\sum_{k=1}^{K}s_i(k)}{\sum_{k=1}^{K}\sum_{i=1}^{N}s_i(k)}, \notag
    %\label{contribution}
    \end{align}
\fi
Therefore, we can derive the utility of BCFL node $e_i$ as
    \begin{align}
    U_{i} (f_i)= \delta \frac{f_i}{f_i+\sum f_{-i}}  - \gamma_i\mu_if_i^2,
    \label{Ui}
    \end{align}
where $\sum f_{-i}$ refers to the summation of CPU cycle frequencies of all BCFL nodes except for the CPU cycle frequency $f_i$ of the BCFL node $e_i$ and can be considered as a constant.

    % \begin{align}
    % when \ f_i = 0,\ U_{i} (f_i) > 0
    % \label{condition3}
    % \end{align}
   %  \begin{align}
   % \arg \max_{f_i} U_{i} (f_i) > 0.
   %  \label{condition4}
   %  \end{align}
     %Solving the condition 
     %(\ref{condition3}) and 
     %(\ref{condition4}), we have 
     %$\delta * C_i> \gamma_i*\mu_i*{\psi_i}^2$ 
     %$\frac{\psi_i}{\beta_i}>0$. Similarly, the condition 
     %(\ref{condition3}) and 
     %(\ref{condition4}) means that edge server $e_i$ 's utility will still be positive when the edge server spends 
     %zero and 
     %optimal total CPU frequency $f_i$.
\subsection{Equilibrium Analysis}
In this subsection, we will acquire the Nash equilibrium from deriving the optimal strategies for both the task publisher and BCFL nodes. 
\begin{definition}
(Stackelberg Equilibrium Strategy).
The Stackelberg equilibrium strategies ($\delta^*$, $f_i^*$) formulate the best response of each player in the Stackelberg game, where $\delta^* = \arg \max_{\delta > 0} U_{tp} (\delta)$ and $f_i^* = \arg \max_{f_i > 0} U_{i} (f_i)$.
% \begin{align}
%    & \delta^* = \arg \max_{\delta > 0} U_{tp} (\delta), \notag \\
%    & f_i^* = \arg \max_{f_i > 0} U_{i} (f_i). \notag
%    %\label{Eqstra}
% \end{align}
\end{definition}
The second stage of the game is a non-cooperative game. For any total rewards $\delta$ given by the task publisher and other BCFL node’s strategies, BCFL node $e_i$ will decide an optimal strategy $f_i^*$ to maximize its individual utility. 

\begin{definition}
(Nash Equilibrium). A set of strategies $f^* = (f_1^*, \cdots, {f_{N}}^* )$ is a Nash equilibrium of the second stage if for each BCFL node $e_i$ and any $f_i$, there exists $U_{i} (f_i^*)\geq U_{i} (f_i)$.
% \begin{align}
%   U_{i} (f_i^*)\geq U_{i} (f_i). \notag
%    %\label{Nasheq}
% \end{align}
\end{definition}

\begin{theorem}
\label{theorem1}
The optimal strategy $f_i^*$ of BCFL node $e_i$ is derived by letting the first order derivative equal 0. %The derived equation is 
% \begin{align}
%   2 \mu_i\gamma_i\beta_i^2 f_i^*+ 4\mu_i\gamma_i\beta_i {f_i^*}^2 + 2\mu_i\gamma_i{f_i^*}^3-\delta\beta_i=0. \notag
%     %\label{solu1}
%     \end{align}
\end{theorem}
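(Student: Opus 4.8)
The plan is to treat the second-stage problem of each BCFL node $e_i$ as a one-dimensional unconstrained maximization over $f_i > 0$, with $\sum f_{-i}$ held fixed as permitted after \eqref{Ui}, and to show that the objective $U_i(f_i)$ is strictly concave, so that the stationary point obtained by setting $\partial U_i/\partial f_i = 0$ is the unique global maximizer $f_i^*$.

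First I would abbreviate $s := \sum f_{-i} > 0$ and differentiate \eqref{Ui} twice:
\begin{align}
\frac{\partial U_i}{\partial f_i} = \frac{\delta s}{(f_i+s)^2} - 2\gamma_i\mu_i f_i, \qquad \frac{\partial^2 U_i}{\partial f_i^2} = -\frac{2\delta s}{(f_i+s)^3} - 2\gamma_i\mu_i. \notag
\end{align}
Since $\delta, s, \gamma_i, \mu_i > 0$, the second derivative is strictly negative for every $f_i \ge 0$, so $U_i$ is strictly concave on $[0,\infty)$; hence it has at most one stationary point, and any stationary point is the global maximum. This already reduces the theorem to verifying that the first-order condition indeed has a unique positive solution.

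Next I would establish existence and interiority. At the boundary, $\left.\partial U_i/\partial f_i\right|_{f_i=0} = \delta/s > 0$, so $U_i$ is increasing at the origin and $f_i^* > 0$; and $\partial U_i/\partial f_i \to -\infty$ as $f_i \to \infty$, so by continuity the derivative vanishes somewhere in $(0,\infty)$, and uniquely so by concavity. Clearing denominators in $\partial U_i/\partial f_i = 0$ gives the characterization
\begin{align}
2\gamma_i\mu_i f_i (f_i+s)^2 = \delta s, \notag
\end{align}
whose left side equals $0$ at $f_i=0$, is strictly increasing, and is unbounded, confirming a single positive root, namely $f_i^*$. Running the same argument simultaneously for all nodes (each best-responding to the others' fixed frequencies) yields the Nash equilibrium $(f_1^*,\dots,f_N^*)$ of the second stage.

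The main obstacle is not the optimization itself — concavity makes the first-order condition both necessary and sufficient — but the fact that the resulting equation is cubic in $f_i$, so a tidy closed form is cumbersome. The cleanest presentation therefore keeps $f_i^*$ implicitly defined by the first-order condition, exactly as the theorem states, and relies on the monotonicity argument above for existence and uniqueness, deferring any explicit cubic-root expression (and a contraction-type argument for the joint fixed point across nodes) to a remark.
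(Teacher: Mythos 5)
Your proposal is correct and follows essentially the same route as the paper: compute the first and second derivatives of $U_i$, observe that the second derivative is strictly negative so $U_i$ is strictly concave, and conclude that the first-order condition characterizes the unique maximizer (your derivative expressions are just the algebraically simplified forms of the paper's). The one thing you add beyond the paper is the boundary check ($\partial U_i/\partial f_i>0$ at $f_i=0$ and $\to-\infty$ as $f_i\to\infty$) establishing that a stationary point actually exists in $(0,\infty)$ --- a small but genuine tightening, since the paper asserts existence from concavity alone.
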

\begin{proof} To study the Nash equilibrium of the second stage of the game, the first-order derivative of $U_{i} (f_i)$ with respect to $f_i$ is derived as
\begin{align}
    \pdv{U_{i} (f_i)}{f_i}= \frac{\delta}{\sum f_{-i}+f_i}-2f_i\mu_i\gamma_i-\frac{\delta f_i}{(\sum f_{-i}+f_i)^2}. \notag
    %\label{FirstdevUi}
    \end{align}
The second-order derivative of $U_{i} (f_i)$ with respect to $f_i$ is derived as
\begin{align}
    \pdv[2]{U_{i} (f_i)}{f_i}= \frac{2\delta }{(\sum f_{-i}+f_i)^2} (\frac{f_i}{\sum f_{-i}+f_i}-1) -2\mu_i\gamma_i, \notag
    %\label{SeconddevUi}
    \end{align}
which is negative since $\frac{f_i}{\sum f_{-i}+f_i}$ is always less than 1.
Therefore, the utility of BCFL node $e_i$ defined in (\ref{Ui}) is a strictly concave function for $f_i > 0$. And there exists  $f_i^*$ that maximizes $U_i$, which can be derived via solving $\pdv{U_{i} (f_i)}{f_i}=0$.

% By letting $\pdv{U_{i} (f_i)}{f_i}=0$, we can have
% \begin{align}
%   f_i^*= \frac{\psi_i}{\beta_i}. \notag
%     %\label{solu1}
%     \end{align}
\end{proof}

Then the optimal value of total CPU cycle frequencies $F$ can be calculated by 
 \begin{align}
    F^* = \sum_{i=1}^{N} f_i^*.
    \label{F*}
    \end{align}
    
Given the above analysis, the task publisher knows that there exists a unique Nash equilibrium among BCFL nodes given any total rewards $\delta$. Therefore, the task publisher will maximize its utility by choosing the optimal total rewards $\delta^*$.

\begin{theorem}
\label{theorem2}
The optimal strategy of the task publisher is given 
\begin{align}
    \delta^* =  \frac{{F^*}\varphi}{\lambda}.
    \label{Delta*}
    \end{align}
\end{theorem}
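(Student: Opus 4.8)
The plan is to treat the task publisher's problem as an unconstrained one‑dimensional maximization of the parabola $U_{tp}(\delta)$ in~\eqref{Um}, mirroring the first‑ and second‑order argument already used for the followers in Theorem~\ref{theorem1}. First I would invoke the backward‑induction structure: Theorem~\ref{theorem1} guarantees a unique second‑stage Nash equilibrium $f^*=(f_1^*,\dots,f_N^*)$ for every announced $\delta$, hence a well‑defined aggregate $F^*$ via~\eqref{F*}. From the leader's viewpoint this $F^*$ is the anticipated total effort, so $U_{tp}$ can be regarded as a function of the single variable $\delta$.

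Next I would differentiate: $\dfrac{dU_{tp}}{d\delta} = -2\left(\lambda\frac{\delta}{F^*}-\varphi\right)\frac{\lambda}{F^*}$ and $\dfrac{d^2U_{tp}}{d\delta^2} = -\dfrac{2\lambda^2}{(F^*)^2} < 0$, so $U_{tp}$ is strictly concave in $\delta$ and its unique stationary point is the global maximizer. Setting the first derivative to zero yields $\lambda\frac{\delta}{F^*}=\varphi$, i.e. $\delta^*=\frac{F^*\varphi}{\lambda}$, which is exactly~\eqref{Delta*}. I would then verify feasibility against the stated constraints: since $\frac{\varphi}{\lambda}>0$ and $F^*>0$ (each $f_i^*>0$), the candidate $\delta^*$ is strictly positive; substituting back gives $U_{tp}(\delta^*)=B>0$, consistent with $U_{tp}(\delta)>0$ and with $-\sqrt{B}<\lambda\frac{\delta^*}{F^*}-\varphi<\sqrt{B}$ (the middle term vanishes). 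Hence $\delta^*=\frac{F^*\varphi}{\lambda}$ is the optimal first‑stage strategy.

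The main obstacle I expect is the implicit circular dependence between $\delta^*$ and $F^*$: the equilibrium aggregate $F^*$ is itself a function of $\delta$ through the second‑stage stationarity conditions, so strictly speaking the optimum is only characterized implicitly by $\lambda\delta^*=\varphi\,F^*(\delta^*)$, and one must argue that such a self‑consistent pair exists and is unique (e.g. via continuity/monotonicity of $\delta\mapsto F^*(\delta)$ — which follows from $f_i^*$ being continuous and increasing in $\delta$ — together with an intermediate‑value argument). The cleaner route, and the one I believe the paper intends, is to treat $F^*$ as the task publisher's fixed prior belief about aggregate effort; then the maximization is genuinely one‑dimensional and~\eqref{Delta*} follows immediately as the vertex of the downward‑opening parabola.
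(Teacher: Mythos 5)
Your proposal matches the paper's proof essentially exactly: both compute $\pdv{U_{tp}}{\delta}$, observe that $\pdv[2]{U_{tp}}{\delta}=-\frac{2\lambda^{2}}{(F^{*})^{2}}<0$ gives strict concavity, and solve the first-order condition to obtain $\delta^{*}=\frac{F^{*}\varphi}{\lambda}$. Your closing remark about the circular dependence of $F^{*}$ on $\delta$ is a fair caveat that the paper does not address (it simply treats $F^{*}$ as the fixed equilibrium aggregate from Stage 2), but it does not change the argument.
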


\begin{proof} The first-order derivative of $U_{tp} (\delta)$ with respect to $\delta$ is derived as
\begin{align}
   \pdv{U_{tp} (\delta)}{\delta}= \frac{2\lambda(F^{*}\varphi-\delta\lambda)}{{F^{*}}^2}. \notag
    %\label{FirstdevUtp}
    \end{align}
Then the second-order derivative of $U_{tp} (\delta)$ with respect to $\delta$ is derived as
\begin{align}
   \pdv[2]{U_{tp} (\delta)}{\delta}= -\frac{2\lambda^2}{F^2}, \notag
    %\label{SeconddevUtp}
    \end{align}
which is clearly negative. 
Therefore, the utility of task publisher defined in (\ref{Um}) is a strictly concave function for $\delta > 0$. That is, there exists a $\delta^*$ maximizing $U_{tp}$.    

By letting $\pdv{U_{tp} (\delta)}{\delta}=0$, we have
    \begin{align}
  \delta^* =  \frac{{F^*}\varphi}{\lambda}. \notag
    %\label{solu1}
    \end{align}
\end{proof}

% The proofs of \textbf{Theorems \ref{theorem1} and \ref{theorem2}} are straightforward and only need to calculate the first- and second-order derivatives to determine the convexity of the utility functions to get the optimal solutions. Due to the space limitation, the detailed proofs are omitted.
% \section{Simulation Setup}
% \label{sec:simsetup}

\section{Security Analysis}
\label{sec:security}
\subsection{Model Plagiarism} 
%The implemented HCDS scheme is used to prevent potential plagiarism. 
In BCFL, the consensus algorithm usually requires all BCFL nodes to exchange their models in the first step for model aggregation, leader election, and verification. However, in this sort of distributed network, it is difficult to achieve model exchanges at all nodes at the same time. 
%each node's broadcasting and receiving processes could take a different amount of time. 
In other words, some nodes could receive models much earlier than others. This provides a chance for malicious nodes that received models earlier to plagiarize models so that they could obtain rewards without consuming their own resources.
%energy used to perform their FEL learning while copying another BCFL node' model and claiming it to be their own. 
To solve this problem, we design %a scheme to let BCFL nodes find that plagiarism is extremely difficult or not cost-effective when compared with behaving normally to perform individual FEL learning. So we used the idea of 
HCDS scheme to make the model plagiarism impossible, or not worthy compared to behaving legally, where the employed Hash-based Commitment computationally hides % in terms of recovering a portion of 
any model $w$. Furthermore, even if the adversary can find $r\ ||\ w$ from $H(r\ ||\ w)$ given enormous computation power, the exact value of $w$ still cannot be derived. This means it is practically impossible to plagiarize any models from others after the Commit Stage in Algorithm~\ref{al_2}. %Hash-based Commitment Scheme is computationally hiding but not binding, but in our case, the goal is to avoid plagiarism. So 

Regarding the binding constraint, even if a malicious node tries to broadcast $r^{\prime}\ ||\ w^{\prime}$ at Reveal Stage instead of the original committed $r\ ||\ w$ at Commit Stage, the chance that revealed $r^{\prime}\ ||\ w^{\prime}$ equals to committed $ r\ ||\ w$ and $w^\prime$ coincidentally matches a model weight committed earlier by another BCFL node is extremely low. If the malicious node finds it challenging to obtain a valid copy $w^\prime$ from another BCFL node such that $r^{\prime}\ ||\ w^{\prime} $ equals to $ r\ ||\ w$, and it is computationally infeasible to infer $w$ from $H(r\ ||\ w)$, the BCFL node will be demotivated to plagiarize.

\subsection{DDoS} Since the leader is determined by the consensus algorithm based on the model similarity evaluation and trustworthy voting, it is impossible to know in advance about the leader node that is highly possible to change every round. In this case, a constantly effective DDoS attack against the leader node cannot be achieved.

\subsection{Bribery Voting} 
\label{sec:briberyvoting}
In the proposed consensus algorithm, vote tallying is involved in deriving the leader node. In this case, a malicious node could bribe other BCFL nodes to cast dishonest votes to interfere with tallying results. Thus, we introduce the BTS into the weighted voting process to encourage BCFL nodes to behave honestly. Given that BCFL node $e_i$ votes for an option, its information score would be lower if the actual fraction of votes received by this option is lower than the average predicted fraction of votes that this option receives. And the prediction score gives a penalty if the BCFL node $e_i$'s predicted fraction of votes that an option receives is lower than the actual fraction of votes. The more commonly this option is voted, the more penalty is given. 

Both malicious and honest nodes will cast a vote for an option, and assign a predicted fraction of votes this option receives to $G_{max}$ while other options as $G_{min}$. This means that for every option, the actual fraction of votes it receives is always roughly the same as the average predicted fraction of votes it receives. The information scores for both malicious and honest nodes are not distinguishable. But malicious nodes that cast votes for a BCFL node other than the correct one with the highest $s_m$ will be penalized by prediction score as it receives a huge penalty for low prediction on the votes the correct BCFL node should receive (as the correct BCFL node receives a higher fraction of votes). Thus, the overall score of malicious nodes will be lower than that of honest nodes. Then the system will impose lower weights on the votes cast by malicious nodes since their CHS tend to be lower. Lower weights on the votes reduce the voting power of the malicious nodes and weaken their impacts on the vote tallying process.

\section{Experimental Evaluation}
\label{sec:experiment}
In this section, we perform several experiments to verify the effectiveness and efficiency of our proposed consensus algorithm and incentive mechanism. 
 %First, we evaluate the performance of components from the Commitment Scheme and the ME mechanism. Then we analyze the effectiveness of the BTSV and the Stackelberg game-based incentive mechanism. 
 All experiments are conducted using Google Colab with Python 3.6.9 running on single core hyper threaded  Intel(R) Xeon(R) CPU at 2.3GHz, Tesla T4 GPU, and 13GB memory. 
 
\subsection{Experiment Setups and Dataset}
We implement the BHFL system where each FEL cluster involves five local clients. Each BCFL node will perform FEL with its associated clients three times before exchanging FEL models with other BCFL nodes at the BCFL system. The default number of BCFL nodes involved is set to 50. 

1) \textbf{Dataset:} The FL task is conducted on MNIST \cite{lecun1998gradient} dataset using the MLP model. The MNIST dataset is a collection of handwritten digits that is widely used as a benchmark for image recognition algorithms. It consists of 60,000 training images and 10,000 testing images, with each image being a grayscale 28x28 pixel image of a handwritten digit from 0 to 9. All training images are used to train the model and randomly distributed to every end device.

2) \textbf{Model:} We utilize a multilayer perception (MLP) model which consists of a flatten layer, a hidden layer with ReLU activation, a dropout layer with a frequency of 0.2, and an output layer with softmax activation. The flatten layer is used to map the original image data to a one-dimensional array that contains 784 elements (28*28) which can be read by the hidden layer. The hidden layer is made up of 128 neurons by default. The output layer contains 10 neurons that correspond to the 10 labels in the MINST dataset. The stochastic gradient descent (SGD) optimizer is used with a learning rate of 0.001, a decay factor equal to half of the learning rate, and a momentum value of 0.9.

%Every FEL network works on the FL task in this evaluation.
\subsection{Evaluation of HCDS}
%Here we provide a numerical analysis of the computation cost of the HCDS affected by various factors. 
HCDS consists of two stages: Commit Stage and Reveal Stage, and we evaluate them separately. The Hash function and Digital Signature Algorithm used in the experiment are SHA-256 and Elliptic Curve Digital Signature Algorithm (ECDSA) \cite{johnson2001elliptic}, respectively. The results reported are derived by averaging results for ten rounds of repeated experiments. % for statistical confidence.

We first investigate the computation cost in Commit Stage, which involves three computational functions: Hash function $H$, $DSign$, and $DVerify$. We first study the impacts of MLP model complexity and random nonce length on Hash function $H$ and $DSign$. 
%As explained in Algorithm~\ref{al_2}, each BCFL node $e_i$ will apply Hash function H and DSign to its possessed FEL model $w^i(k)$. 
%The number of edge servers in the BCFL system will not affect the computation cost. 
As shown in Lines 2-3 in Algorithm~\ref{al_2}, %the length of the output (digest) from the Hash function is fixed regardless of the MLP model and random nonce. The time cost for the signing algorithm DSign is constant, given that the input digest length is the same.
the only factors that will affect the time cost of $H$ and $DSign$ are inputs $r^i(k)$ and $w^i(k)$, which are specifically the random nonce length and the MLP model complexity. To model changes in the complexity of the MLP model, we change the number of neurons in the hidden layer in the MLP model. 
As shown in Fig. \ref{fig_3_hbcds}, the time cost increases linearly with the length of the random nonce length. Furthermore, the rise in the amount of time cost caused by the change in model complexity is almost constant regardless of the random nonce length given. All these trends are consistent with the inherent attributes of the selected Hash function $H$ and Digital Signature Algorithm.

\begin{figure}[ht]
\centering
\subfigure[Model Complexity and Random Nonce Length vs. $H$ and $DSign$ cost.]{
\label{fig_3_hbcds}
\includegraphics[width=0.23\textwidth]{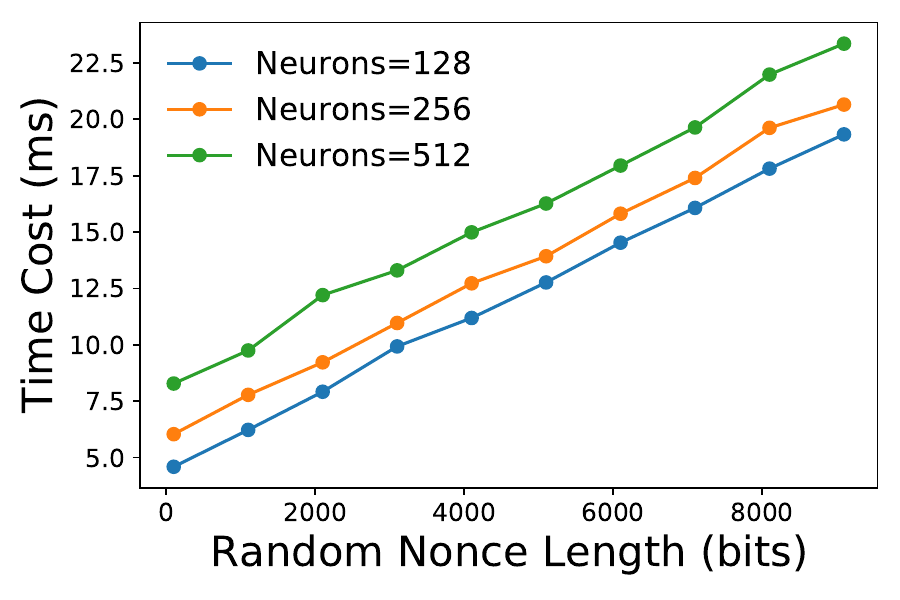}}
\subfigure[Network Size vs. $DVerify$ cost.]{
\label{fig_3_commit}
\includegraphics[width=0.23\textwidth]{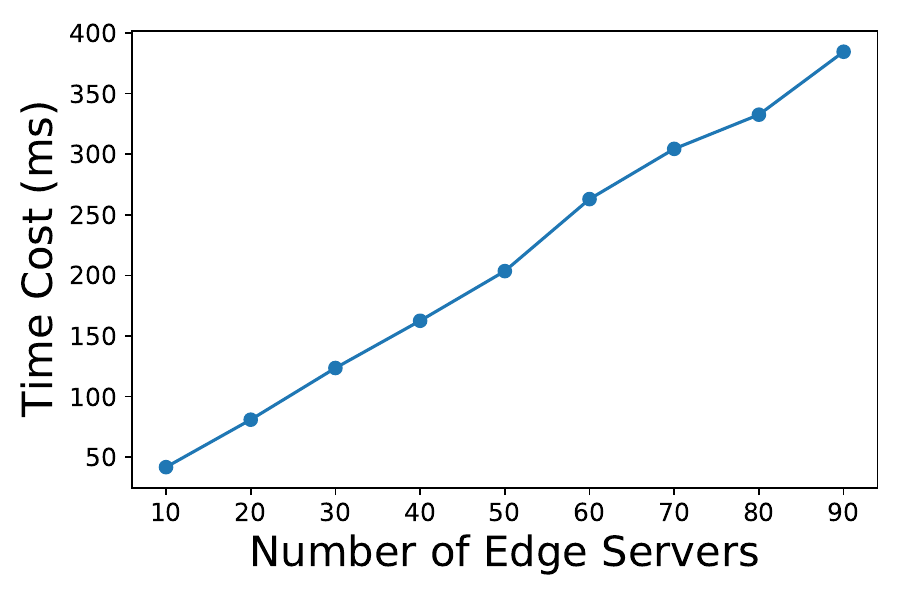}}
\caption{Computation Cost in Commit Stage.}
\label{fig_3}
\end{figure}

Then we examine the impact of BCFL network size $N$ on the computation cost of $DVerify$. The result is illustrated in Fig. \ref{fig_3_commit}. More  BCFL nodes in the network bring more  FEL models for $DVerify$ to execute at BCFL node $e_i$. Similarly, a linear relationship between the time cost and the number of BCFL nodes is observed, and such a relationship matches with the Digital Signature Algorithm used.

Finally, we analyze the impact of BCFL network size, random nonce length, and model complexity on the computation cost in Reveal Stage, which involves the Hash value computations and $DVerify$. Thus all previously mentioned factors will impact the computation cost in Reveal Stage. Fig. \ref{fig_4_1} demonstrates a linear trend between the number of BCFL nodes and time cost. We also notice that the impact of the random nonce is inconspicuous on the relationship between the number of BCFL nodes and time cost. We could infer the reason by comparing Fig. \ref{fig_3_commit} and Fig. \ref{fig_4_1}. % Fig. \ref{fig_3_commit} represents the time taken by $DVerify$, and Fig. \ref{fig_4_1} represents the time taken by $DVerify$ and Hash function $H$. 
The random nonce length will affect the time taken by Hash function $H$ but not $DVerify$. And the inputs of $DVerify$ are tags and digests, which are all fixed given the same Hash function $H$ and $DSign$. This could imply that the amount of time increased in Hash function $H$ due to the change of the random nonce length is minor compared with the total amount of time needed by $DVerify$ and Hash function $H$.

On the other hand, the model complexity gradually impacts the trend between the time cost and the number of BCFL nodes, as shown in Fig. \ref{fig_4_2}. The more significant number of neurons leads to more time cost computing the Hash value for a single FEL model, and more BCFL nodes mean more FEL models to process. Therefore, a steeper trend between the time cost and the number of BCFL nodes is observed when the number of neurons is more considerable (more accumulated additional time cost of a single FEL model).

\begin{figure}[ht]
\centering
\subfigure[Random Nonce Length \& Network Size.]{
\label{fig_4_1}
\includegraphics[width=0.23\textwidth]{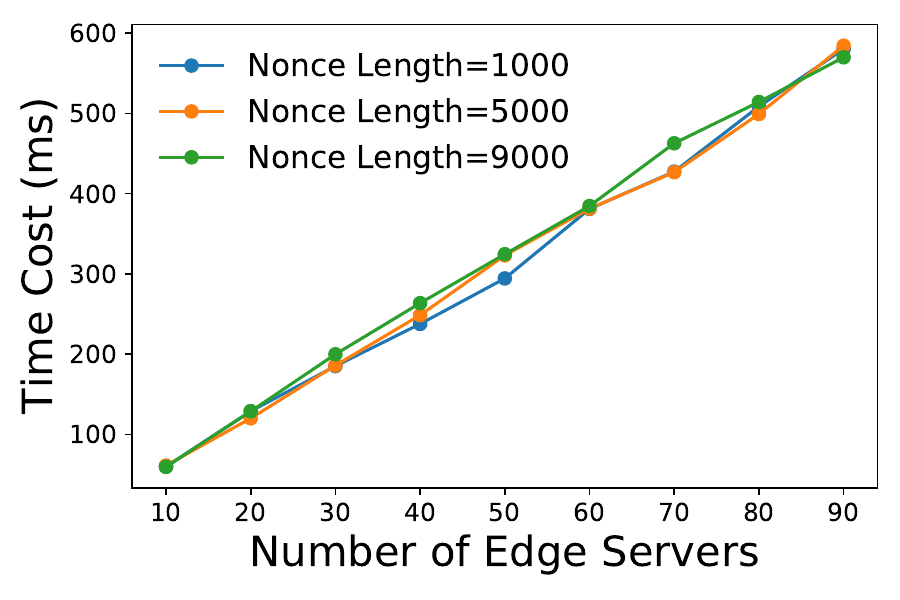}}
\subfigure[Model Complexity \& Network Size.]{
\label{fig_4_2}
\includegraphics[width=0.23\textwidth]{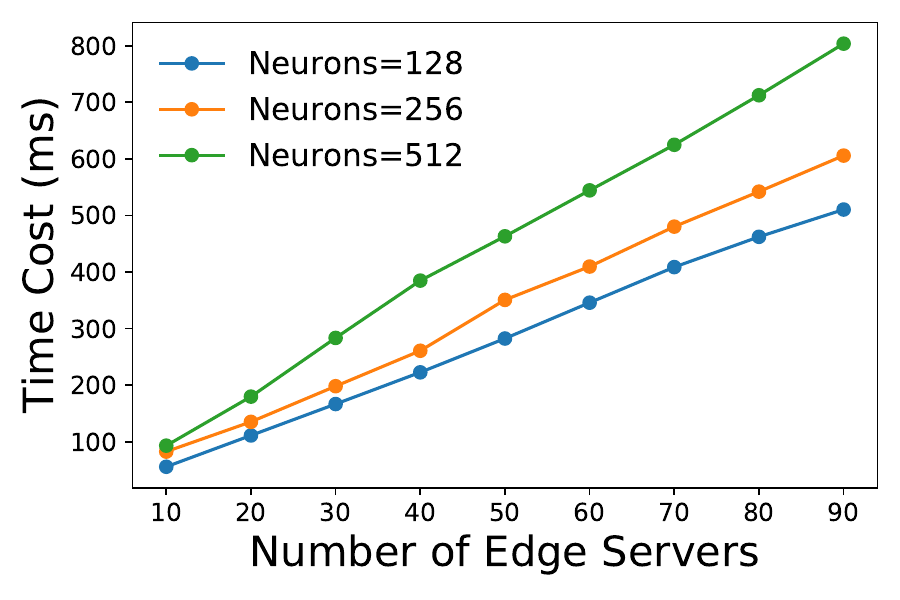}}
\caption{Computation Cost in Reveal Stage.}
\label{fig_4}
\end{figure}
\subsection{Experiments on Model Evaluation}
In this subsection, experiments are designed to analyze the computation cost and the randomness of the ME mechanism. Randomness refers to the difference in the chance of each BCFL node participating in the same FL task being selected as the leader. The randomness metric directly reflects the fairness of ME. The input of the ME scheme is FEL models. Therefore, the number of BCFL nodes and the model complexity will affect its computation time. % of the ME mechanism.

First, we will investigate the impact of the number of BCFL nodes and the model complexity on the time cost of the ME mechanism. Experimental results are plotted in Fig. \ref{fig_5_1}. The number of BCFL nodes and the time cost of ME demonstrates a linear trend. We can see that when the number of neurons is small, the trend between the number of BCFL nodes and the time cost is mild. This indicates the same reason as explained in Fig. \ref{fig_4_2} that more neurons cause more time for ME to compute the similarity of one FEL model. Thus, more BCFL nodes (more FEL models to proceed) lead to increased accumulated time costs.

Then we evaluate the randomness of our proposed ME scheme. % with the distribution of the client's data. 
We implement two distribution types of the clients' data: independent and identically distributed (IID) and non-IID. For non-IID, each client possesses data with roughly six out of ten labels in the MNIST dataset. For IID, data with all labels are available to each client. Fig. \ref{fig_5_2} presents the number of times each BCFL node is selected as the leader before the global model of the FL task is converged. Observations indicate less fairness when the data distribution is non-IID. This could be caused by the dominance of several BCFL nodes having more labels available on their associated clients' local data compared with other BCFL nodes. Given more diverse data, the FEL model learned is more representative. The chance of the BCFL node being selected as the leader increases. The randomness of ME is closely related to the actual distribution of the data possessed by each client.
% Then we evaluate the randomness of our proposed ME scheme with the different number of edge servers and distribution of the client's data possession. We implement two distribution types: IID and non-IID. For non-IID, each client possesses data with roughly six out of ten labels in the MNIST dataset. As illustrated in Fig. \ref{fig_5_2}, the ME scheme could achieve considerable fairness when the number of edge servers is significant. The fairness is poor when the data distribution is non-IID. This could be caused by the dominance of one edge server having the most data labels on its associated clients. Also, we suspect that the reason why fairness is achieved better when the number of edge servers is considerable is that in our experiment, the total number of labels available is ten. The total number of data available is limited. Therefore when the number of edge servers is vast, the data possessed by each client will become much smaller. As a result, the variance of the FEL model weights learned will increase significantly. The performance of the ME scheme is directly related to the variance of the FEL model weights.
\begin{figure}[ht]
\centering
\subfigure[Model Complexity and Network Size.]{
\label{fig_5_1}
\includegraphics[width=0.23\textwidth]{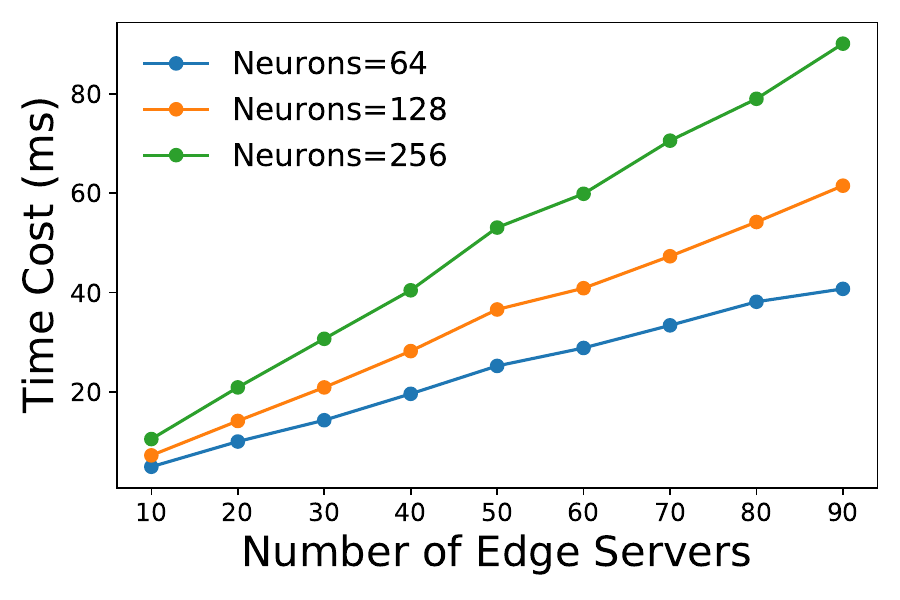}}
\subfigure[Different Clients' Data Distribution.]{
\label{fig_5_2}
\includegraphics[width=0.23\textwidth]{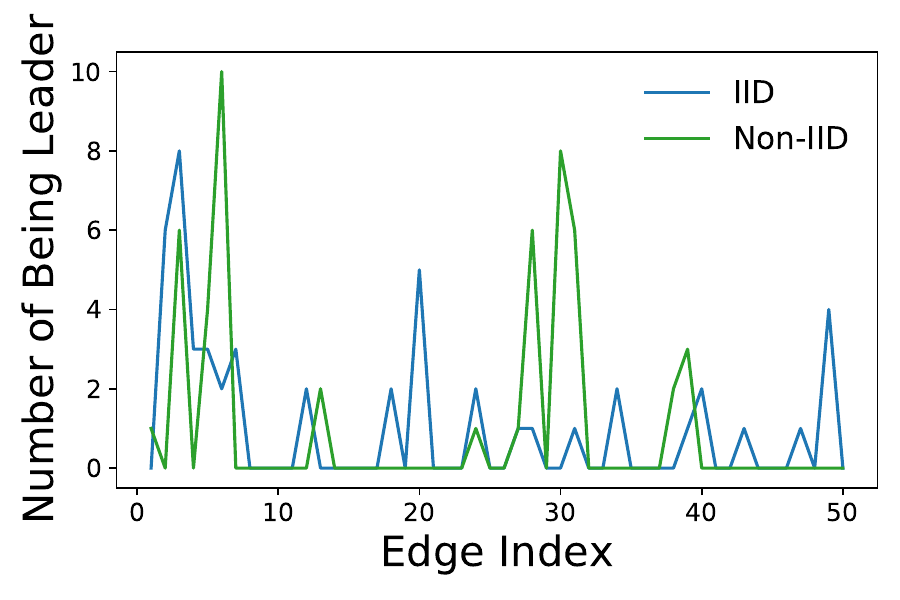}}
\caption{Computation Cost and Randomness of ME.}
\label{fig_5}
\end{figure}
\subsection{Evaluation of Bayesian Truth Serum-based Voting}
We validate the effectiveness of the BTSV by demonstrating its ability to distinguish malicious BCFL nodes from honest ones and limit their influences on voting results through the differences in the weight of votes (WV). Here we refer to honest and malicious BCFL nodes as honest nodes (HNs) and malicious nodes (MNs), respectively. For comparison, we design two attack strategies: targeted attack (TA) and random attack (RA), where TA means that all MNs collude with each other and vote the same BCFL node as the leader, while RA implies that each MN randomly casts a vote for a BCFL node. Also, the value of $G_{max}$ is set to 0.99 and $G_{min}$ is $\frac{1-0.99}{N-1}$ for model evaluation. We denote the chance of a malicious node behaving maliciously as CBM. And CHS considers 20 consecutive rounds of historical scores if available (i.e., $c$ is set to 20). Coefficients $\beta$, $\theta$, and $\epsilon$ used to derive WV are set to 1.3, 0.4, and 1.2, respectively. %\qh{why these values? will other options derive similar results? } 
As discussed in Section \ref{sec:BTS}, these coefficients limit the range of $WV^i(k)$ between 0 and 1.3, and ensure $WV^i(k) \approx 1$ when CHS is 0. Other coefficients derive similar results as the changing trend mainly depends on CHSs.% and resulting voting weights simply reflect the CHSs. 

By comparing the BTSV under these two attack strategies, we present the experimental results in Fig. \ref{fig_6}. The differences in the average weight of the vote of HNs and MNs are clear after several rounds of BTSV iterations. It can be seen that all the trends of voting weight under RA are steeper than that under TA. This means the difference between voting weights is more obvious under RA, while TA is always more difficult to be restricted than RA regardless of the proportion of MNs and the probability of MNs performing attacks. %\qh{how can readers derive this conclusion? what are the observations from figures?}. 
%It is worth noting that the total number of nodes (edge servers) will not affect trends presented in Fig. \ref{fig_6}. Since the major components ($\overline{x}_j$, $\overline{y}_j$ and $p_j^i(k)$) used in calculating BTS score are all variables related to the percentage of all voters' actions instead of the actual number of voters. The changes in the total number of nodes will not affect the CHSs calculation and thus have no effects on voting weights. %\qh{how can readers derive this conclusion given that you never mention the impact of the number of nodes before?}

We first discuss how the probability of MNs performing attacks (CBM) affects the difference in voting weights between HNs and MNs under both attack strategies. By comparing Figs. \ref{fig_6_1} and \ref{fig_6_2} or Figs. \ref{fig_6_3} and \ref{fig_6_4}, with the same proportion of MNs, more difference in voting weights is observed when CBM increases under both attacks. Thus it is easier to have the attack recognized and penalized when the probability of MNs performing attacks is high.

Then we investigate how the proportion of MNs impacts the difference in voting weights between HNs and MNs. By comparing Figs. \ref{fig_6_1} and \ref{fig_6_3} or Figs. \ref{fig_6_2} and \ref{fig_6_4}, with the same CBM, less difference in voting weights is observed when the proportion of MNs raises under TA, while the difference under RA remains similar.  
%\qh{compare which to which? it is hard to understand in this writing. can we only compare two figures each time? again, you forgot to describe the results in the figure.} 
This indicates that changes in the proportion of MNs do not affect the amount of penalty imposed on MNs under RA. While for TA, when either the proportion of MNs is low or CBM is high, more penalties on voting weights will be distributed to MNs.

Lastly, we analyze the corresponding reasons from the perspective of BTS score. This score represents how common the selected option is compared to the common prediction, and the prediction score measures the difference between the actual and predicted distribution of selections. In RA, a rise in the proportion of MNs will lead to the same increase in score components for MNs and HNs because additional MNs vote for random BCFL nodes instead of already cast votes. This introduces the same additional elements for summation in (\ref{xa}), (\ref{ya}), and (\ref{predictscore}), thus leading to the same increase in eventual scores and consequently a similar difference between average WV for both MNs and HNs. However, this is not the case for TA since a change in the probability of MNs performing attacks or the proportion of MNs will cause a direct value difference in the information and prediction score components (the number of votes for the same selection is changed). Such differences are amplified after deriving the weight of the vote using a sigmoid function.
\begin{figure}[ht]
\centering
%\subfigure[20\% Malicious Nodes, Each Has 30\% Chance Behaving Malignant]{
\subfigure[20\% MNs, CBM=30\%.]{
\label{fig_6_1}
\includegraphics[width=0.23\textwidth]{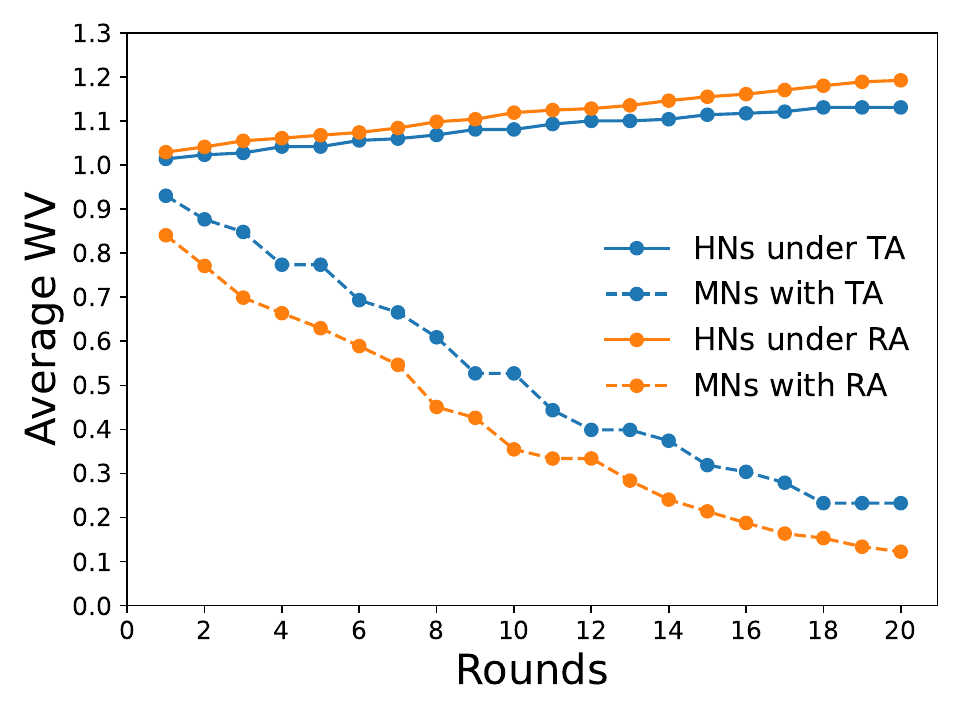}}
\subfigure[20\% MNs, CBM=70\%.]{
\label{fig_6_2}
\includegraphics[width=0.23\textwidth]{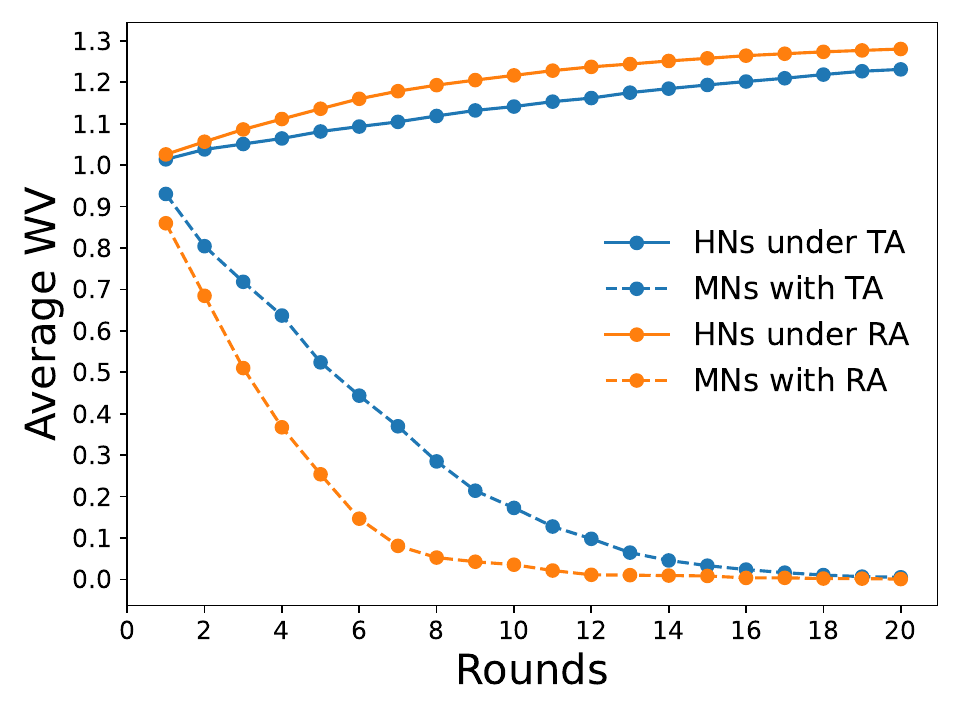}}
\subfigure[40\% MNs, CBM=30\%.]{
\label{fig_6_3}
\includegraphics[width=0.23\textwidth]{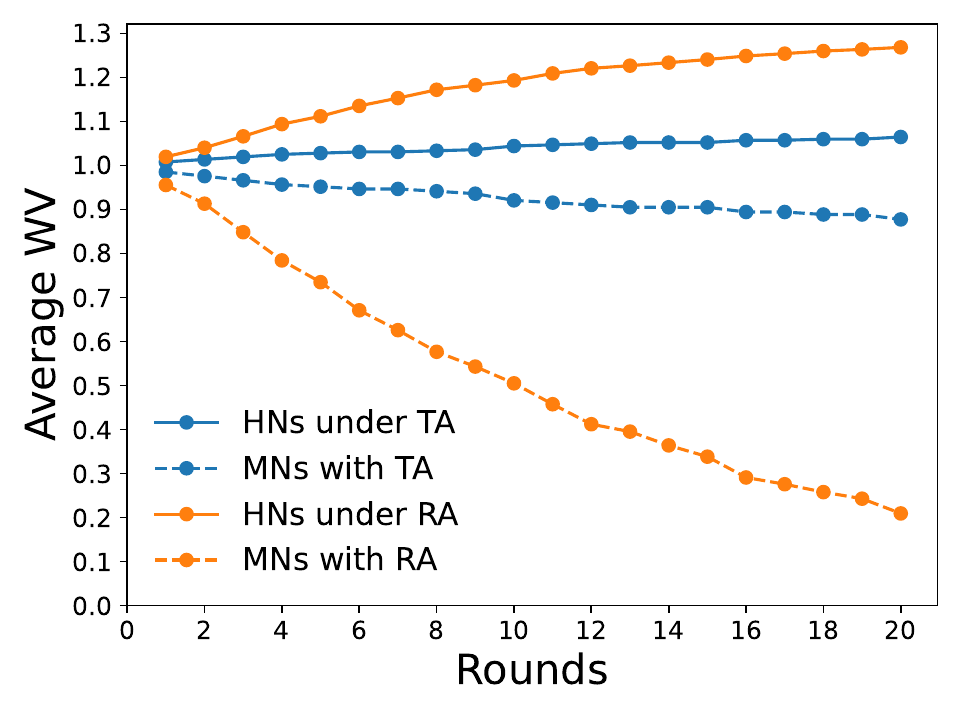}}
\subfigure[40\% MNs, CBM=70\%.]{
\label{fig_6_4}
\includegraphics[width=0.23\textwidth]{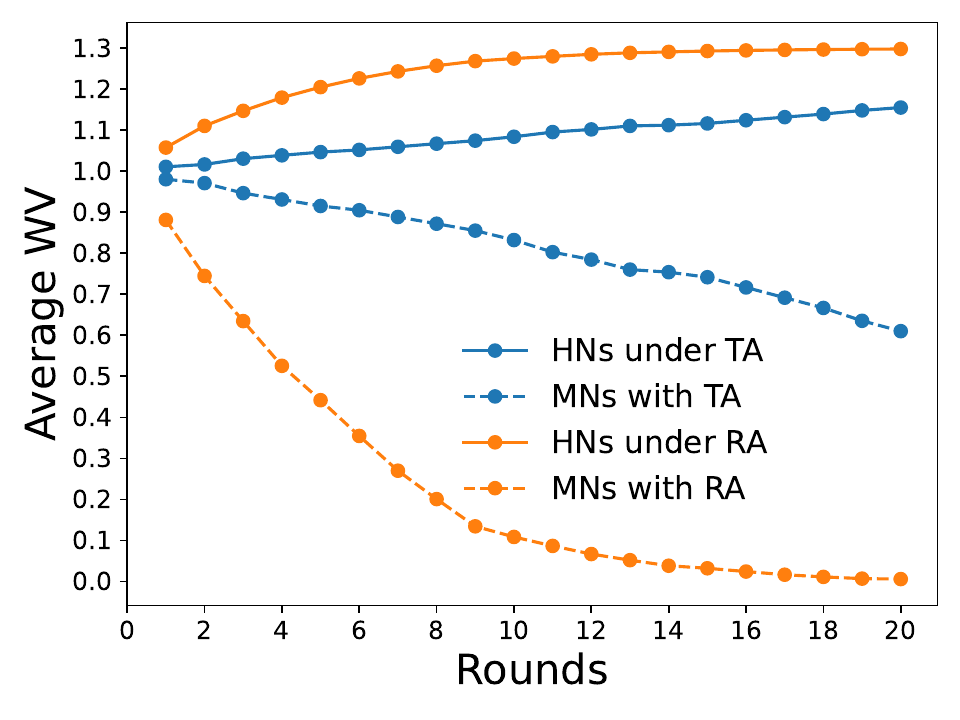}}
\caption{Average weight of the vote for Honest Nodes (HNs) and Malicious Nodes (MNs) Performing Targeted Attack (TA) vs. Random Attack (RA). }
\label{fig_6}
\end{figure}

\subsection{Evaluation of Incentive Mechanism}
Finally, we study the effects of strategies of the task publisher $\delta$ and BCFL nodes $f_i$ on utility functions. In this experiment, we set $B$ = 500, $\varphi$ = 5, $\lambda$ = 1, $\mu_i$ = 5, $\sum f_{-i}=1000$ and $\gamma_i$ = 0.01 to satisfy constraints of utility functions. 

First, impacts of the total rewards $\delta$ and total CPU cycle frequencies $F$ on the model owner's utility are evaluated. 
%It is worth noting that regarding the task publisher, the change in these parameters will only affect its optimal strategy $\delta^*$ but not its utility. Since the utility function we proposed for the task publisher describes an expectation between the optimal total rewards $\delta^*$ and total efforts (CPU cycle frequencies) $F$ provided by edge servers. It always aims to match the expectation, and thus utility is maximized to $B$ constantly. 
The total CPU cycle frequencies $F$ is set to 1000 when evaluating the strategy of model owner $\delta$. And the total rewards $\delta$ is set to 5000 when evaluating the sum of strategies of BCFL nodes $F$. As illustrated in Fig. \ref{fig_7_1}, the model owner's utility $U_{tp}$ is inversely proportional to $F$. This is as expected as the denominator of $U_{tp}$ is $F$. Fig. \ref{fig_7_3} demonstrates the existence of the optimal strategy $\delta$ for the model owner. 

Next, we evaluate the impacts of the total rewards $\delta$ and total CPU cycle frequency $f_i$ on a BCFL node $e_i$'s utility. The total CPU cycle frequency $f_i$ is set to 40 when evaluating the strategy of model owner $\delta$. The total rewards $\delta$ is set to 5000 when evaluating the strategy of BCFL node $e_i$. It can be seen in \eqref{Ui}, the BCFL node $e_i$’s utility $U_i$ and $\delta$ show a linear relationship, as indicated in Fig. \ref{fig_7_2}. Similarly, Fig. \ref{fig_7_4} demonstrates the existence of the optimal strategy $f_i$ for BCFL node $e_i$. 

\begin{figure}[ht]
\centering
\subfigure[$F$ vs. $U_{tp}$.]{
\label{fig_7_1}
\includegraphics[width=0.23\textwidth]{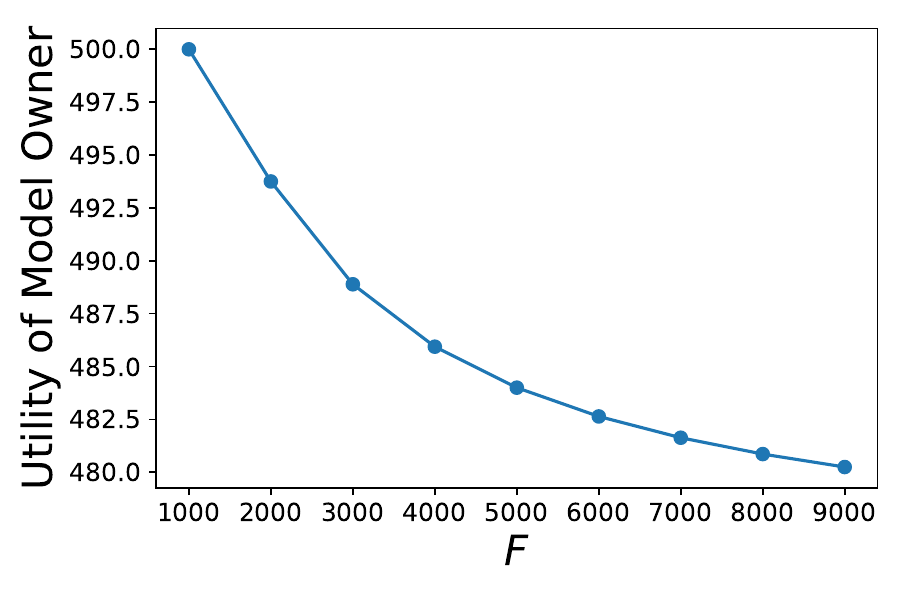}}
\subfigure[$\delta$ vs. $U_i$.]{
\label{fig_7_2}
\includegraphics[width=0.23\textwidth]{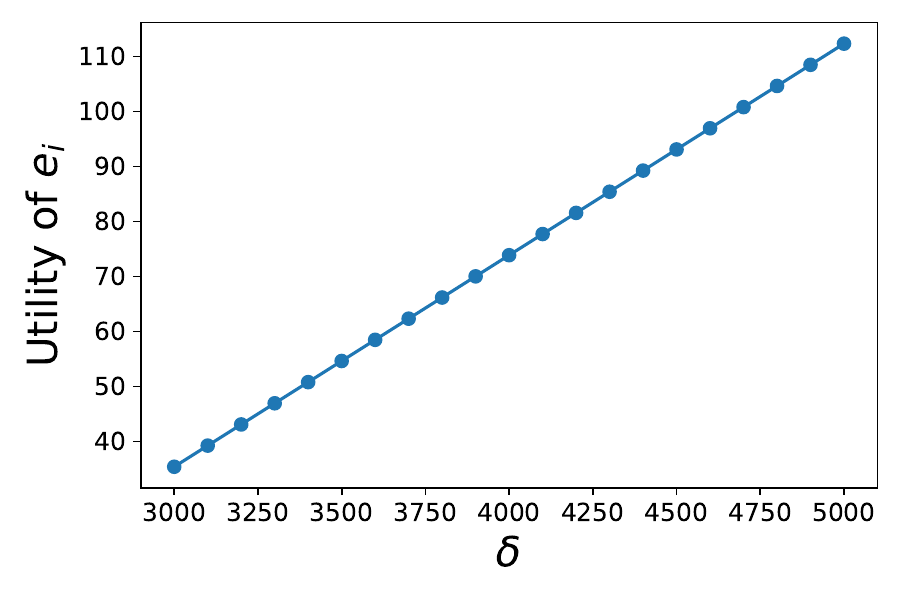}}
\subfigure[$\delta$ vs. $U_{tp}$. ]{
\label{fig_7_3}
\includegraphics[width=0.23\textwidth]{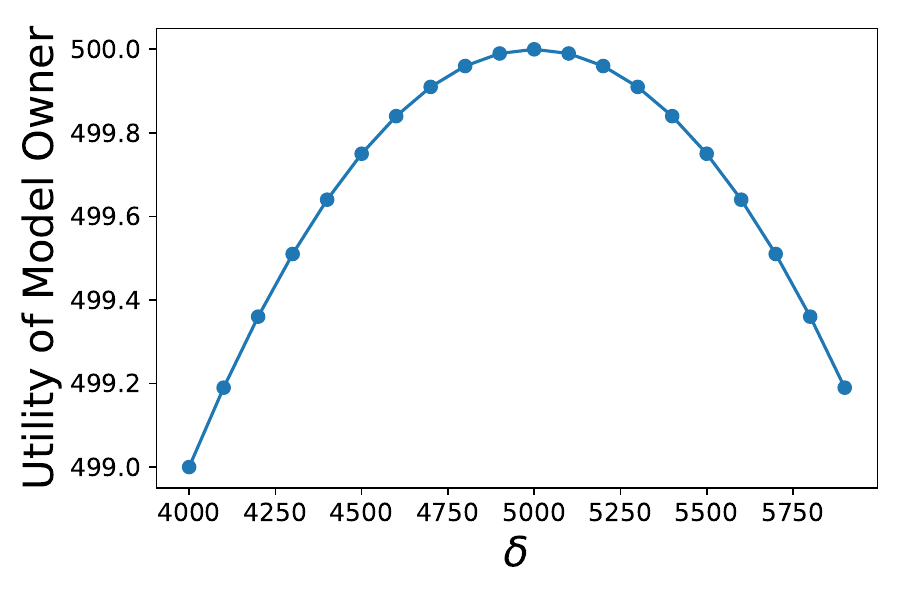}}
\subfigure[$f_i$ vs. $U_i$. ]{
\label{fig_7_4}
\includegraphics[width=0.23\textwidth]{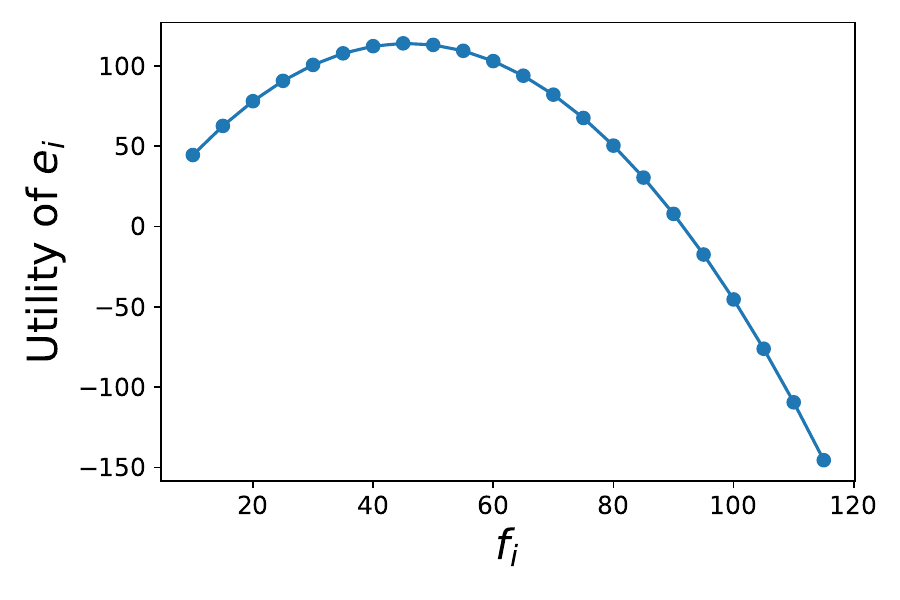}}
\caption{Impact of $F$ and $\delta$ on Utility Functions $U_{tp}$ and $U_i$.}
\label{fig_7}
\end{figure}

% \begin{figure}[ht]
% \centering
% \subfigure[Strategy of Model Owner $\delta$ vs. Utility of Owner $U_{tp}$ ]{
% \label{fig_8_1}
% \includegraphics[width=0.23\textwidth]{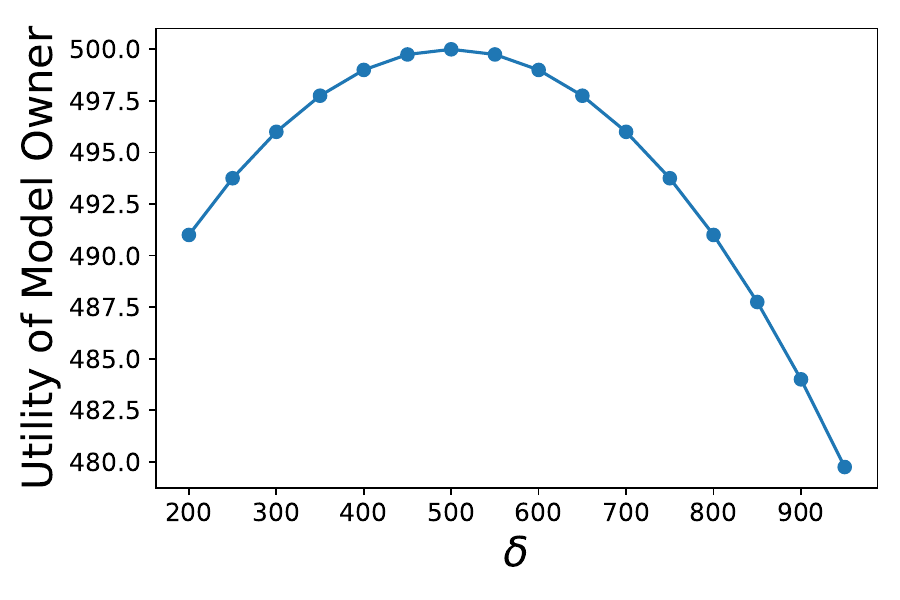}}
% \subfigure[Strategy of Edge Server $f_i$ vs. Utility of Edge Server $U_i$ ]{
% \label{fig_8_2}
% \includegraphics[width=0.23\textwidth]{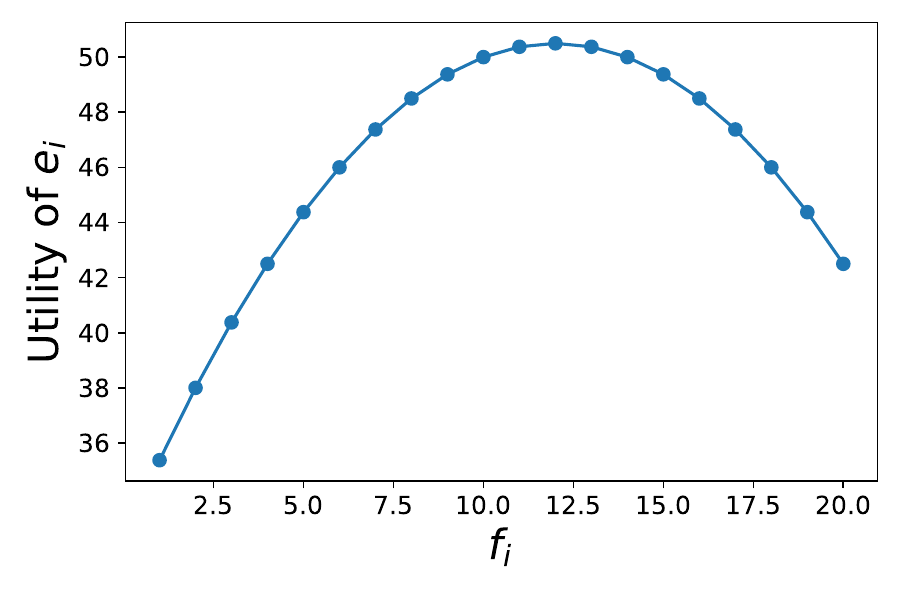}}
% \caption{Existence of Optimal Strategies for Utility Functions $U_i$ and $U_{tp}$  }
% \label{fig_8}
% \end{figure}

\section{Conclusion}
\label{sec:discussion}
In this paper, we propose a BHFL framework with a novel efficient consensus algorithm to avoid extra waste of computation resources during blockchain consensus. The consensus algorithm utilizes intermediate models from BCFL nodes and compares each model with the updated global model for leader node election. Additionally, we combine the Hash-based Commitment and Digital Signature Algorithm to resolve the model plagiarism issue that occurs during the model exchange process; and we also employ BTSV in the vote tally smart contract for determining the leader node so as to prevent voting bribery from malicious nodes. A two-stage Stackelberg game is modeled to design the incentive mechanism, providing motivation for clients in the FEL clusters to contribute to the learning task. Experimental results demonstrate the validity, fairness, and efficiency of our proposed system and mechanisms. In the future, we plan to design a consensus mechanism that enables more fairness among blockchain nodes regarding leader election when the training data is non-IID.

\bibliographystyle{IEEEtran}
\bibliography{refs}
\begin{IEEEbiography}[{\includegraphics[width=1in,height=1.25in,clip,keepaspectratio]{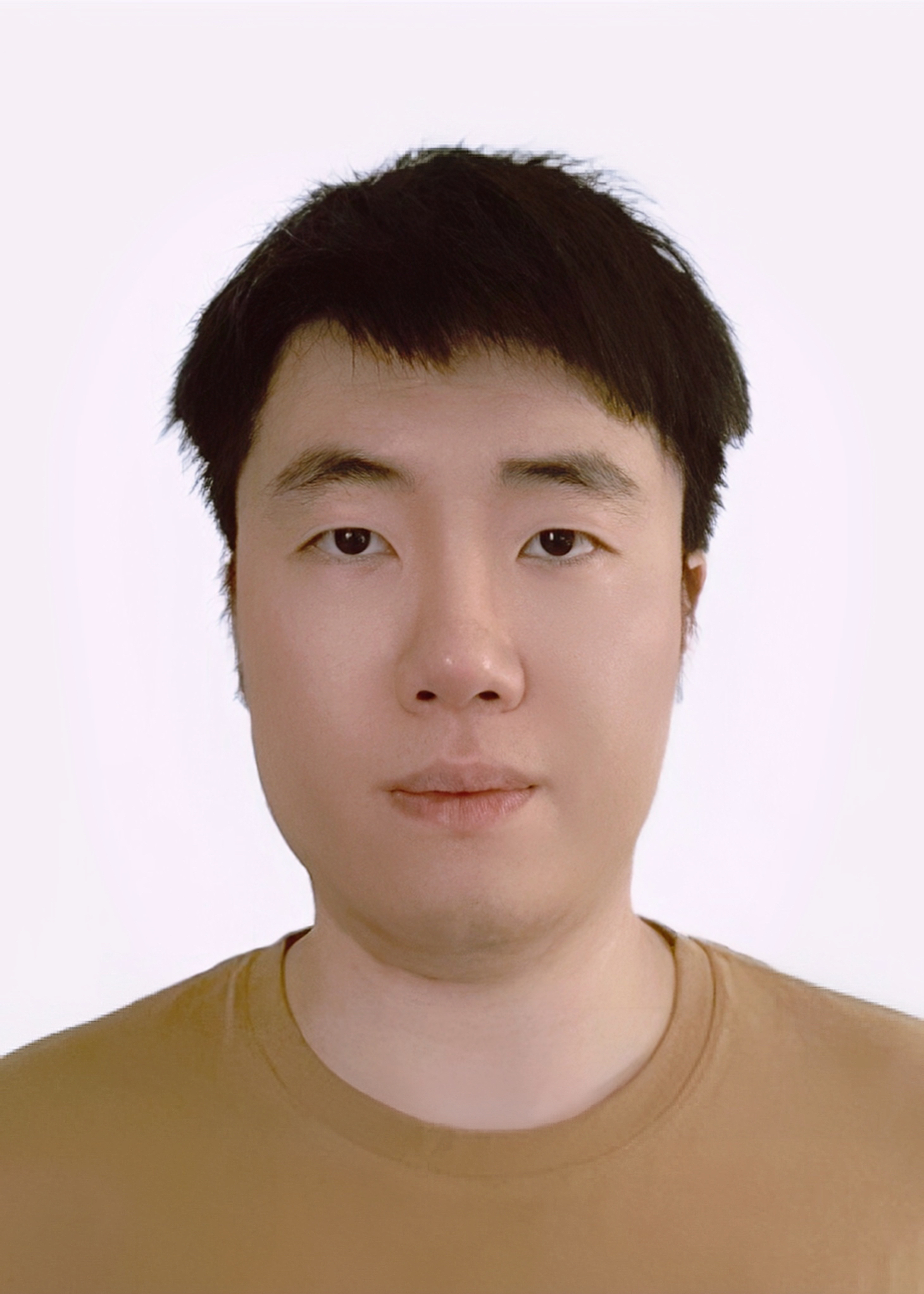}}]{Shengyang Li} 
received the M.Eng. degree in Computer Science and Electronics from the University of Bristol in 2019. He is currently working toward the Ph.D. degree in Electrical and Computer Engineering at Indiana University-Purdue University Indianapolis (IUPUI). He is a Research Assistant
with IUPUI.
\end{IEEEbiography}
\begin{IEEEbiography}[{\includegraphics[width=1in,height=1.25in,clip,keepaspectratio]{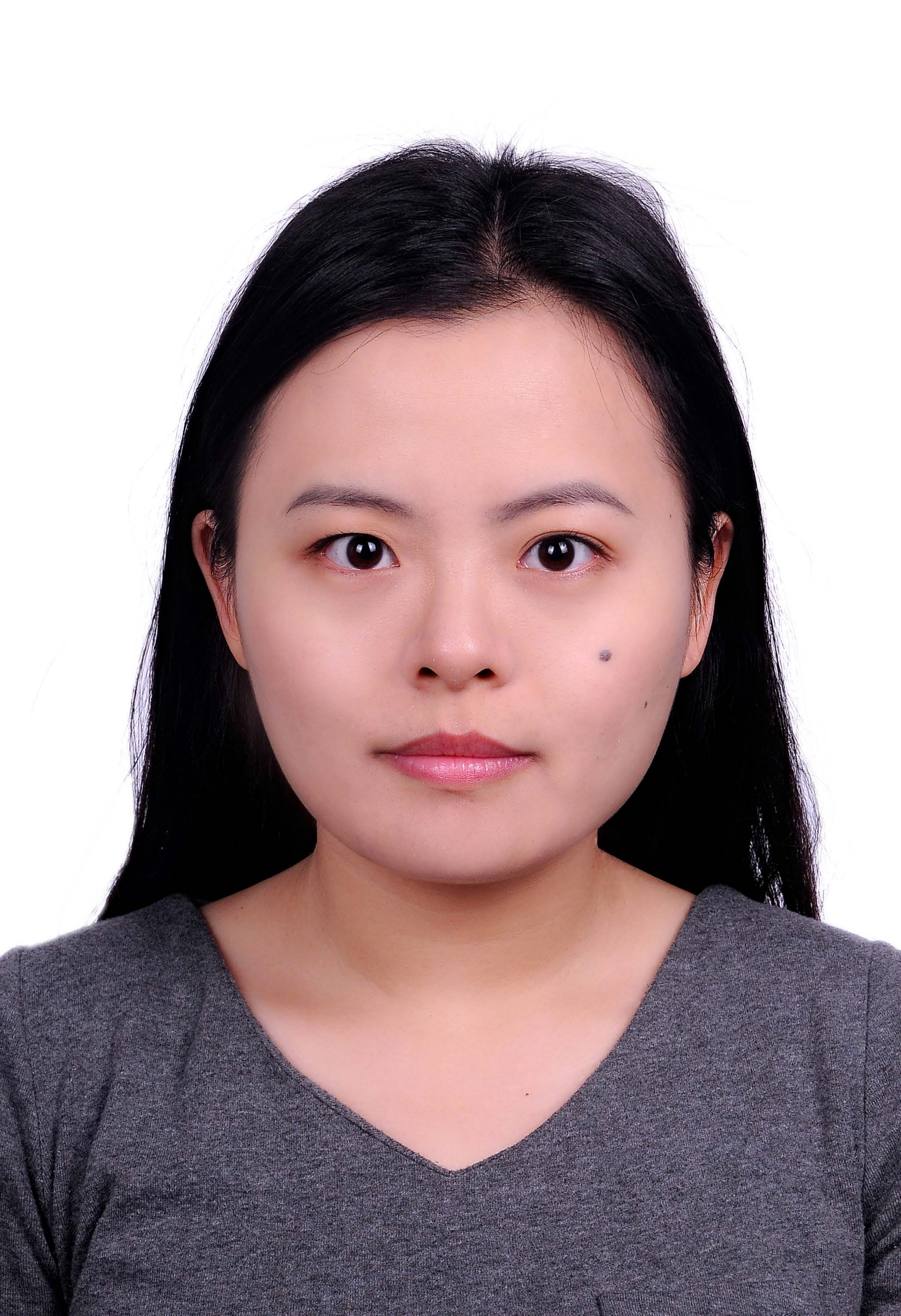}}]{Qin Hu} received her Ph.D. degree in Computer Science from the George Washington University in 2019. She is currently an Assistant Professor with the Department of Computer and Information Science, Indiana University-Purdue University Indianapolis (IUPUI). She has served on the Editorial Board of two journals, the Guest Editor for multiple journals, the TPC/Publicity Co-chair for several workshops, and the TPC Member for several international conferences. Her research interests include wireless and mobile security, edge computing, blockchain, and federated learning.
\end{IEEEbiography}

\begin{IEEEbiography}[{\includegraphics[width=1in,height=1.25in,clip,keepaspectratio]{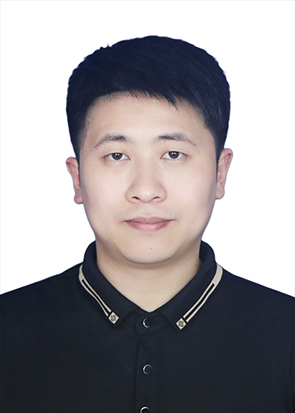}}]{Zhilin Wang} received his B.S. from Nanchang University in 2020. He is currently pursuing his Ph.D. degree in Computer and Information Science at Indiana University-Purdue University Indianapolis (IUPUI). He is a Research Assistant with IUPUI, and he is the reviewer of IEEE TPDS, IEEE IoTJ, Elsevier JNCA, IEEE TCCN, IEEE ICC'22, IEEE Access, and Elsevier HCC; he also serves as the TPC member of the IEEE ICC'22 Workshop. His research interests include blockchain, federated learning, edge computing, and optimization theory.
\end{IEEEbiography}

\end{document}